\newcommand{\defeq}{\vcentcolon=}
\newcommand{\bra}[1]{\langle #1|}
\newcommand{\ket}[1]{|#1\rangle}
\newcommand{\intfty}{\ensuremath{\int _{-\infty} ^{\infty}} }
\newcommand{\Om}{\ensuremath{\Omega}}
\newcommand{\Lap}{\ensuremath{\Delta }}
\newcommand{\La}{\ensuremath{\Lambda }}
\newcommand{\wt}{\ensuremath{\widetilde}}
\newcommand{\pat}{\ensuremath{\partial}}
\newcommand{\be}{\begin{equation}}
\newcommand{\ee}{\end{equation}}
\newcommand{\ba}{\begin{eqnarray}}
\newcommand{\ea}{\end{eqnarray}}
\newcommand{\bi}{\begin{itemize}}
\newcommand{\ei}{\end{itemize}}
\newcommand{\bn}{\begin{enumerate}}
\newcommand{\en}{\end{enumerate}}
\newcommand{\bp}{\begin{proof}}
\newcommand{\ep}{\end{proof}}
\newcommand{\bt}{\ensuremath{\textbf}}
\newcommand{\mbf}{\ensuremath{\mathbf}}
\renewcommand{\bm}{\ensuremath{\mathbb}}
\newcommand{\mc}{\ensuremath{\mathcal}}
\newcommand{\mr}{\ensuremath{\mathrm}}
\newcommand{\dal}{\ensuremath{\Box }}
\newcommand{\eps}{\ensuremath{\epsilon }}
\newcommand{\la}{\ensuremath{\lambda}}
\newcommand{\ip}[2]{\ensuremath{\langle {#1} , {#2} \rangle}}
\newcommand{\dom}[1]{\ensuremath{\mathrm{Dom} ({#1}) }}
\newcommand{\ran}[1]{\ensuremath{\mathrm{Ran} ({#1}) }}
\newtheorem{thm}{Theorem}
\newtheorem{lemming}{Lemma}
\begin{document}

\bibliographystyle{unsrt}

\title[Covariant UV cutoff for fields on spacetime]{A fully covariant information-theoretic ultraviolet cutoff for scalar fields in expanding FRW spacetimes}

\author{A. Kempf, A. Chatwin-Davies}
\address{ Department of Applied Mathematics, University of Waterloo}
\author{R.T.W Martin}
\address{Department of Mathematics and Applied Mathematics, University of Cape Town}

\begin{abstract}

While a natural ultraviolet cutoff, presumably at the Planck length, is widely assumed to exist in nature, it has proven difficult to implement a minimum length scale covariantly. A key reason is that the presence of a fixed minimum length would seem to contradict the ability of Lorentz transformations to contract lengths. In this paper, we implement a fully covariant Planck scale cutoff by cutting off the spectrum of the d'Alembertian. In this scenario, consistent with Lorentz contractions, wavelengths that are arbitrarily smaller than the Planck length continue to exist. However, the dynamics of modes of wavelengths that are significantly smaller than the Planck length possess a very small bandwidth. This has the effect of freezing the dynamics of such modes. While both, wavelengths and bandwidths, are frame dependent, Lorentz contraction and time dilation conspire to make the freezing of modes of transplanckian wavelengths covariant. In particular, we show that this ultraviolet cutoff can be implemented covariantly also in curved spacetimes. We focus on Friedmann Robertson Walker (FRW) spacetimes and their much-discussed transplanckian question: The physical wavelength of each comoving mode was smaller than the Planck scale at sufficiently early times. What was the mode's dynamics then? Here, we show that in the presence of the covariant UV cutoff, the dynamical bandwidth of a comoving mode is essentially zero up until its physical wavelength starts exceeding the Planck length. In particular, we show that under general assumptions, the number of dynamical degrees of freedom of each comoving mode all the way up to some arbitrary finite time is actually finite. Our results also open the way to calculating the impact of this natural UV cutoff on inflationary predictions for the CMB.

\end{abstract}

\maketitle

\section{Introduction}

Heuristic
quantum gravity arguments suggest the existence of some form of a smallest length in nature. Assume, for example, one tried to resolve
a distance with an uncertainty of less than a Planck length. Due to the uncertainty principle, this should imply a
 momentum uncertainty above the Planck momentum. This in turn, due to Einstein's equation, should lead to an uncertainty in  curvature of the order of a Planckian size curvature radius. This curvature uncertainty foils the attempt to measure a distance with a precision below the Planck length.
The difficulty with this intuition lies in expressing the existence of a minimum length covariantly. It would appear that the existence of a fixed minimum length in nature contradicts the ability of  Lorentz transformations to contract any given length.

In this context, it was proposed in \cite{Kempfsamp2,Kempf-cutoff} that the notion of a smallest length in
nature could manifest itself in that all physical fields possess no wavelength smaller than a fixed smallest wavelength. While the notion of a finite minimum wavelength is not covariant either, it can be turned into a covariant notion, as we will discuss. Before, however, it is instructive to collect properties of fields which possess no wavelength smaller than some fixed finite smallest wavelength, say at the Planck length.

To this end, we draw results from information theory. In information theory, the crucial equivalence between continuous and discrete information is established by Shannon sampling theory. It holds that bandlimited functions
 can be perfectly reconstructed anywhere from the values they take
 on any chosen discrete sets of points whose average spacing is given by the inverse of the bandwidth. For example, the amplitudes of 20KHz bandlimited continuous music signals are usually sampled and recorded at the rate corresponding to this bandwidth. Then, using Shannon sampling theory, the continuous music signal can be reconstructed perfectly for all times from only these discrete samples. Inaccuracies can only arise from errors made in taking the samples. While the samples need not be taken equidistantly, in engineering practice they usually are. The reason is that the more unevenly spaced the samples are taken, the more the reconstruction is sensitive to measurement inaccuracies.

If physical fields are bandlimited they too can be reconstructed everywhere from their amplitudes on any discrete lattice of sufficiently dense average spacing. It is worth pointing out  a key difference to physical theories that are defined on a single lattice. In contrast to those theories, here the fields can be represented on continuous spacetime or, equivalently, on \it any \rm discrete lattice whose average spacing is dense enough. The fact that no lattice is singled out implies, for example, that external symmetries such as translation and rotation invariance are not broken. With a minimum wavelength cutoff, spacetime could be in effect simultaneously continuous and discrete in mathematically the same way that information can be simultaneously continuous and discrete, see \cite{Kempf-cutoff-NJP}.

Concretely, functions with a minimum wavelength are called $\Om$-bandlimited, where $\Om>0$ is the
 magnitude of the highest frequency component they contain. On the real line, $\bm{R}$,
 the value that an $\Om-$bandlimited function
 $f$ takes at a point $x\in \mathbb{R}$ is completely reconstructible for example from
the values it takes on the discrete set of points $\{ x_n :=
\frac{n\pi}{\Om} \} _{n \in \mathbb{Z} }$, namely through the Shannon
sampling formula \cite{Shannon}: \be   f(x) = \sum _{n\in
\mathbb{Z}} f(x_n) \frac{\sin \left( \Om (x-x_n) \right)}{ \Om
(x-x_n)} \label{eq:shannon} .\ee

In $\bm{R} ^n$, a bandlimit, or ultraviolet cutoff of this kind can be established
as a cutoff on the spectrum of the Laplacian, $\Lap := - \sum_{j=1}^{n} \frac{d^2}{d(x^j)^2}$. This is because the
subspace of fields whose Fourier transforms have support only on
the disc of radius $\Om$ in $\mathbb{R} ^n$ is that subspace
of $L ^2 (\mathbb{R} ^n)$ spanned by the eigenfunctions of $\Lap$
whose eigenvalues are less than or equal to $\Om ^2 $ in
magnitude. More precisely, this subspace is the range of
$\chi _{[0, \Om ^2 ]} (\Lap )$ where $\chi _{[0 , \Om ^2 ]}$ is
the characteristic function of the interval $[0, \Om ^2 ]$ and
$\chi _{[0 , \Om ^2 ]} (\Lap )$ is a projection defined by the functional calculus.

    This ultraviolet cutoff on fields in flat space is naturally
generalizable to scalar fields on curved space \cite{Kempf-cutoff}. Given an arbitrary curved space, or Riemannian manifold, we can assume its Laplacian to be self-adjoint. If it possesses boundaries, we assume suitable boundary conditions have been chosen. We now restrict the space of
physical scalar fields to be that subspace of square integrable functions on
the manifold spanned by the eigenfunctions of the Laplacian whose eigenvalues are less than or equal to $\Om
^2$, where $\Om$ is the bandlimit or ultraviolet cutoff parameter. That is,
define the space of physical fields on a manifold $M$ to be $B(M,
\Om ) := \chi _{[-\Om ^2 , \Om ^2 ]} (\Lap ) L^2 (M)$, where
$\Lap$ is the Laplacian of the manifold $M$. This
cutoff is covariant since the spectrum of the Laplacian is covariant.
For a proof of the sampling property for certain classes of Riemannian
manifolds, see \it e.g. \rm, \cite{Pesenson}.

In \cite{Kempf-cutoff}, to restore full covariance, a cutoff on the spectrum of the d'Alembertian on Minkowski space has been investigated. This cutoff is covariant because the spectrum of the d'Alembertian is scalar. Furthermore, it indeed provides a way to overcome the  paradox that a fundamental minimum length should not be able to coxist with Lorentz transformations since the latter can contract any length further. Namely, it was found that,
in this scenario, wavelengths that are arbitrarily smaller than the Planck length continue to exist. However, the dynamics of modes of wavelengths that are significantly smaller than the Planck length then automatically possess a very small bandwidth. Their dynamics in effect freezes: It suffices to take samples at a very low rate in time to reconstruct the dynamics at all points in time.
While both wavelengths and bandwidths are frame dependent, Lorentz contraction and time dilation conspire to make this behaviour covariant, as they have to because cutting the d'Alembertian's spectrum is covariant.

Here, we extend this preliminary analysis by showing that this ultraviolet cutoff can be implemented covariantly also in curved spacetimes. We study Friedmann Robertson Walker (FRW) spacetimes in particular, where we focus on the much-discussed trans-planckian question: The physical wavelength of each comoving mode was smaller than the Planck scale at sufficiently early times. What was a mode's dynamics then? (See \cite{Jacobson} for a review of the trans-planckian question, \cite[Section V]{Brandenberger}, \cite{Martin-Brandenberger, Danielsson} for a cosmological introduction, and \cite{shiu05} for a more recent review. Various approaches to the trans-planckian question are studied in \cite{niemeyer01,Kempf-pertspec,brandenberger01,brandenberger05, easther01-1,easther01-2,easther03, greene05,easther05} to name a few.) Here, we will show that in the presence of the covariant UV cutoff on the spectrum of the d'Alembertian, the dynamical bandwidth of a comoving mode is essentially zero up until its physical wavelength starts exceeding the Planck length. In particular, we show that under general assumptions, the number of dynamical degrees of freedom of each comoving mode all the way up to some arbitrary finite time is actually finite. Concretely, the number of samples in time that need to be taken from the beginning up to some finite time is finite. Our results should open the way to calculating also the impact of this natural UV cutoff on inflationary predictions for the CMB.


In preparation, let us now recall certain results from the sampling theory of bandlimited functions on $\bm{R} ^n$ that we will later need.

\section{Review of Basic Sampling Theory of bandlimited functions}

    Much is known about the reconstruction and interpolation
properties of bandlimited functions in one dimension. In essence, any bandlimited function can be reconstructed from any
discrete set of points provided that set of points is sufficiently dense
on the real line. Density is defined in the following way. Let
$\La := \{ x_n \} _{n \in \mathbb{Z}} \subset \mathbb{R} $ be a  strictly increasing
sequence of real numbers with a finite minimum spacing between its
members. Such a set $\La$ is called a set of sampling if any
bandlimited function can be reconstructed in a stable fashion from
the values it takes on the points of $\La$. More precisely, $\La$ is called a set of sampling if
there are constants $c, C >0$ such that for every $f \in B(\mathbb{R},\Om)$,
\be c \| f \| ^2 \leq \sum _k |f (x_k ) | ^2 \leq C \| f \| ^2 .\ee  The upper inequality always holds
provided the sampling sequence $\La$ has a finite minimum spacing \cite{Young}.  The double inequality shows that the linear map
from $B(\mathbb{R}, \Om) $ to $l^2 (\bm{Z} )$ which maps $f \in B(\mathbb{R}, \Om)$ onto its sequence of sample values $\{f (x_n) \} $
is invertible.  In particular, this implies that $f$ can be perfectly reconstructed from its sample values and that a bounded
error in the sample values leads to a bounded error in the reconstructed function, \emph{i.e.}, $f$ can be reconstructed
from its sample values in a stable fashion. Now let $n (r)$ be the minimum number of points of $\La$ in any subinterval of length
$r$. The lower Beurling density of a sequence $\La $ is then defined as $D (\La ) := \lim _{r\rightarrow \infty} \frac{n (r)}{r} $. With
this definition, the following theorem holds \cite{Landau1}

\begin{thm}{ (Beurling) }

    The set of points $\La$ is a set of sampling for $B(\mathbb{R}, \Om )$ if $D (\La )
> \frac{\Om}{\pi} $. Conversely, if $\La$ is a set of sampling
then $D (\La ) \geq \frac{\Om}{\pi} $.

\end{thm}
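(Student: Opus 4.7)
The plan is to translate the problem into entire function theory via the Paley--Wiener theorem, which identifies $B(\mathbb{R},\Om)$ with the Hilbert space of entire functions of exponential type at most $\Om$ that are $L^2$ on the real axis. Point evaluation is then a bounded linear functional, and $B(\mathbb{R},\Om)$ becomes a reproducing kernel Hilbert space with kernel $K_\Om(x,y) = \sin(\Om(x-y))/(\pi(x-y))$. Since $f(x_n) = \langle f, K_\Om(\cdot, x_n)\rangle$, the sampling condition is precisely the statement that $\{K_\Om(\cdot, x_n)\}_{n \in \mathbb{Z}}$ forms a frame for $B(\mathbb{R},\Om)$, so the two Beurling density conditions must be matched against the existence of the two frame constants.

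For the sufficient direction, assume $D(\La) > \Om/\pi$. The upper frame bound is the Plancherel--Polya inequality, which holds whenever $\La$ has positive minimum separation, as stated in the excerpt. The lower frame bound is the substantive content. Following Beurling, I would pick $\Om'$ with $\pi D(\La) > \Om' > \Om$, so that $\La$ is strictly "oversampling" $B(\mathbb{R},\Om')$ relative to the critical Nyquist rate $\Om'/\pi$, and construct an auxiliary interpolating entire function of exponential type at most $\Om' - \Om$ vanishing on a suitably sparsified subset of $\La$. Combining this interpolant with contour integration on horizontal strips transfers the $L^2$ norm of $f \in B(\mathbb{R},\Om)$ to a weighted sum over $\La$ of sample values, yielding the lower bound. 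A less constructive alternative is the uniqueness-plus-compactness route: Beurling's gap theorem ensures that no nonzero element of $B(\mathbb{R},\Om)$ vanishes on a set of lower Beurling density strictly exceeding $\Om/\pi$, and a normal families argument then converts this qualitative uniqueness into the quantitative lower frame bound.

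For the necessary direction, I would invoke Landau's concentration-operator argument. Let $P_\Om$ denote the orthogonal projection of $L^2(\mathbb{R})$ onto $B(\mathbb{R},\Om)$ and $Q_T$ the projection onto functions supported in $[-T,T]$; the composition $P_\Om Q_T P_\Om$ is trace class and, by the Landau--Slepian--Pollak eigenvalue asymptotics, has $2T\Om/\pi + o(T)$ eigenvalues clustering near $1$ as $T \to \infty$. If $\La$ were a set of sampling with $D(\La) < \Om/\pi$, then for sufficiently large $T$ the sample points inside $[-T,T]$ together with a thin guard band would number strictly fewer than the dimension of the "essentially time-localized" subspace counted above. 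Restricting the sampling map to this approximately $2T\Om/\pi$-dimensional subspace yields a linear map from a space of strictly greater dimension than its codomain, hence with nontrivial kernel, violating the stable reconstruction inequality. A careful version of this dimension count produces the asymptotic bound $D(\La) \geq \Om/\pi$.

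The main obstacle is the lower frame inequality in the sufficient direction: converting the density surplus $D(\La) - \Om/\pi > 0$ into a quantitative coercive bound is where the genuine analytic content of the theorem lies, and requires nontrivial entire function machinery (either Beurling's explicit interpolation construction or an equivalent balayage argument). The converse direction, while conceptually subtle, essentially reduces to a dimension-counting argument once Landau's eigenvalue asymptotics for the time-frequency concentration operator are invoked as a black box.
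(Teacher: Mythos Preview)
The paper does not prove this theorem. It is stated as a classical result and attributed to the literature via the citation \cite{Landau1}; no argument, sketch, or proof idea is given in the text. Your proposal therefore cannot be compared against a ``paper's own proof'' because there is none.

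That said, your outline is a faithful summary of how the standard proofs actually proceed. The Paley--Wiener identification of $B(\mathbb{R},\Om)$ with entire functions of exponential type, the reformulation of the sampling inequality as a frame condition for the reproducing kernels, the Plancherel--Polya upper bound under a separation hypothesis, and Landau's concentration-operator dimension count for the necessary direction are all correct ingredients and match the arguments in Beurling's original work and in Landau's paper cited here. Your honest flagging of the lower frame bound as the genuinely hard step is also accurate: that is where Beurling's balayage or interpolation construction does the real work, and your sketch of ``pick $\Om' > \Om$ and build an auxiliary entire function of type $\Om'-\Om$'' is the right shape of the argument, though of course the details are substantial. If you were asked to supply a proof for a paper like this one, the appropriate move would be exactly what the authors did: cite the result rather than reproduce a multi-page argument from sampling theory.
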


    More generally, one can consider square integrable functions in
$\mathbb{R} ^n$ whose Fourier transforms have support only in some
compact $n-$dimensional set $S$. Call the space of such functions
$B(\mathbb{R}^n,S)$. The necessity part of the previous theorem generalizes to
these frequency limited functions \cite{Landau1}:

\begin{thm}{ (Landau) }
    If the set of points $\La$ is a set of sampling for $B (\mathbb{R}^n,S)$
then $ D (\La ) \geq \frac{ \mu (S) } {(2\pi )^n} $

\label{thm:Landau}
\end{thm}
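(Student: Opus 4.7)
The plan is to follow Landau's classical strategy, which rests on comparing dimensions of two subspaces attached to a large ball. First, I would introduce the time-frequency localization operator $T_R := P_S\, Q_R\, P_S$, where $P_S : L^2(\mathbb{R}^n) \to B(\mathbb{R}^n,S)$ is the orthogonal projection realized by Fourier multiplication by $\chi_S$, and $Q_R$ is multiplication by the characteristic function of the ball $B_R \subset \mathbb{R}^n$ of radius $R$. Then $T_R$ is a positive, self-adjoint trace-class operator with $\mathrm{tr}(T_R) = \mu(S)\,\mu(B_R)/(2\pi)^n$. The key input, a Weyl-type concentration estimate generalizing Landau--Pollak--Slepian to higher dimensions and arbitrary compact $S$, is that for any fixed $\alpha \in (0,1)$ the number $N_\alpha(R)$ of eigenvalues of $T_R$ exceeding $\alpha$ satisfies $N_\alpha(R) = \mu(S)\,\mu(B_R)/(2\pi)^n + o(R^n)$ as $R \to \infty$.

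Next, I would bring the sampling hypothesis into play. Fix $\alpha$ close to $1$ and let $V_R \subset B(\mathbb{R}^n,S)$ be the span of the eigenvectors of $T_R$ with eigenvalues $\geq \alpha$. Every $f \in V_R$ satisfies $\|Q_R f\|^2 \geq \alpha \|f\|^2$, so its $L^2$ mass outside $B_R$ is small. Combining this with a pointwise bound on the reproducing kernel of $B(\mathbb{R}^n,S)$ shows that samples of $f \in V_R$ taken on $\Lambda \setminus B_{R+\delta}$ are small compared with $\|f\|$, provided $\delta$ is chosen to grow slowly with $R$. Coupled with the lower sampling inequality $c \|f\|^2 \leq \sum_{\lambda \in \Lambda} |f(\lambda)|^2$, this yields, for $R$ sufficiently large,
\begin{equation*}
\tfrac{c}{2}\,\|f\|^2 \;\leq\; \sum_{\lambda \in \Lambda \cap B_{R+\delta}} |f(\lambda)|^2, \qquad f \in V_R .
\end{equation*}

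With that in hand, the restriction map $f \mapsto (f(\lambda))_{\lambda \in \Lambda \cap B_{R+\delta}}$ is injective on $V_R$, so that the number of lattice points satisfies $|\Lambda \cap B_{R+\delta}| \geq \dim V_R = N_\alpha(R)$. Dividing by $\mu(B_R)$, using $\mu(B_{R+\delta})/\mu(B_R) \to 1$, and passing to the limit in the multidimensional analogue of the lower Beurling density (counting minimal occupancy over large cubes or balls), gives $D(\Lambda) \geq \mu(S)/(2\pi)^n$.

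The main obstacle is the concentration step, namely the eigenvalue clustering asymptotic for $T_R$ for arbitrary compact $S$, where one cannot rely on the explicit prolate-spheroidal machinery available in one dimension and $S$ an interval. The standard way through is to show that $T_R - T_R^2 = P_S Q_R (I - P_S) Q_R P_S$ is trace class with trace of lower order than $R^n$, which forces the spectrum of $T_R$ to cluster near $\{0,1\}$ and thereby pins down $N_\alpha(R)$ to leading order. The remaining leakage estimate on samples outside $B_{R+\delta}$ is routine once one has an off-diagonal decay estimate for the kernel of $P_S$.
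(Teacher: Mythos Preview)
The paper does not actually prove this theorem: it is stated as a known result and attributed to Landau's original paper \cite{Landau1}, with no argument supplied. So there is no ``paper's own proof'' to compare against; the theorem is quoted as background input for the subsequent sampling discussion.

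That said, your outline is essentially Landau's original strategy: the concentration operator $T_R = P_S Q_R P_S$, the trace identity $\mathrm{tr}(T_R) = \mu(S)\mu(B_R)/(2\pi)^n$, the eigenvalue-clustering estimate via $\mathrm{tr}(T_R - T_R^2) = o(R^n)$, and the dimension-count argument that the sampling lower bound forces $|\Lambda \cap B_{R+\delta}| \ge \dim V_R$. One point to be careful about: the paper's definition of the lower Beurling density in higher dimensions is stated somewhat loosely (it reuses the one-dimensional formula $\lim n(r)/r$ rather than $n(r)/\mu(B_r)$), so when you ``divide by $\mu(B_R)$'' and pass to the limit you should make explicit which normalization you are using and check it matches the statement. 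Apart from that, your sketch is a faithful summary of the classical proof; the clustering step (showing $\mathrm{tr}(T_R - T_R^2)$ is genuinely of lower order for arbitrary compact $S$) is indeed where the real analytic work lies, and Landau handles it via kernel estimates rather than any prolate-spheroidal machinery, consistent with what you indicate.
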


    Here $\mu (S)$ denotes the Lebesgue measure, or volume of S.
$\La $ is a set of points in $\mathbb{R} ^n $ with a
finite minimum spacing between any two of its members and $n (r)$ is
the smallest number of  points of $\La $ in any $n-$dimensional ball of radius
$r$. This theorem demonstrates that in order for a set of points to be a
set of sampling for frequency limited functions, it must be
sufficiently dense, where the minimum possible density is proportional
to the volume of the compact set $S$ on which the Fourier transforms of the functions in $B(\mathbb{R}^n,S)$
have support.

    Observe that if $f \in B(\mathbb{R}^n,S)$, then there is an $M>0$ such that $S \subset M^n + \mbf{x}$ where
$M^n$ is an $n-$dimensional cube of side length $M$ and $\mbf{x} \in \bm{R} ^n$. It follows, by one-dimensional
sampling in each co-ordinate, that $\La := \{ \bm{Z} ^n \frac{2\pi}{M} \}$ is a set of sampling for
$B(S)$. This shows that sets of sampling always exist for any
space of frequency limited functions. Of course this example of a
set of sampling will have density greater than that required by
Landau's theorem, unless $S$ is itself a cube.

\section{Flat spacetime}

    Let $M$ denote flat $1+d$ dimensional spacetime. Here, the d'Alembertian is simply
$- \frac{\pat ^2}{\pat t ^2 }  + \Delta $, where $\Delta$ is the
spatial Laplacian. The eigenfunctions of the d'Alembertian are
the plane waves $ e^{i(p_0t - \mathbf{p} \cdot \mathbf{x} ) } $,
where $p_0$ is a temporal frequency and $\mathbf{p}$ is the
spatial frequency vector. In order to impose a covariant ultra-violet cutoff $\Omega$, we restrict our attention to the subspace $B(M,\Om)$ of square integrable functions on $M$ spanned by the plane
waves whose frequencies, or eigenvalues, obey the inequality \be
\left| p_0^2 - |\mathbf{p}|^2 \right| \leq \Om ^2.
\label{eq:bandlimit} \ee

\begin{figure}[h!]
\centering
\caption{Fixed spatial $\mbf{p}$ modes have finite temporal bandwidth.}
\includegraphics[scale=.2]{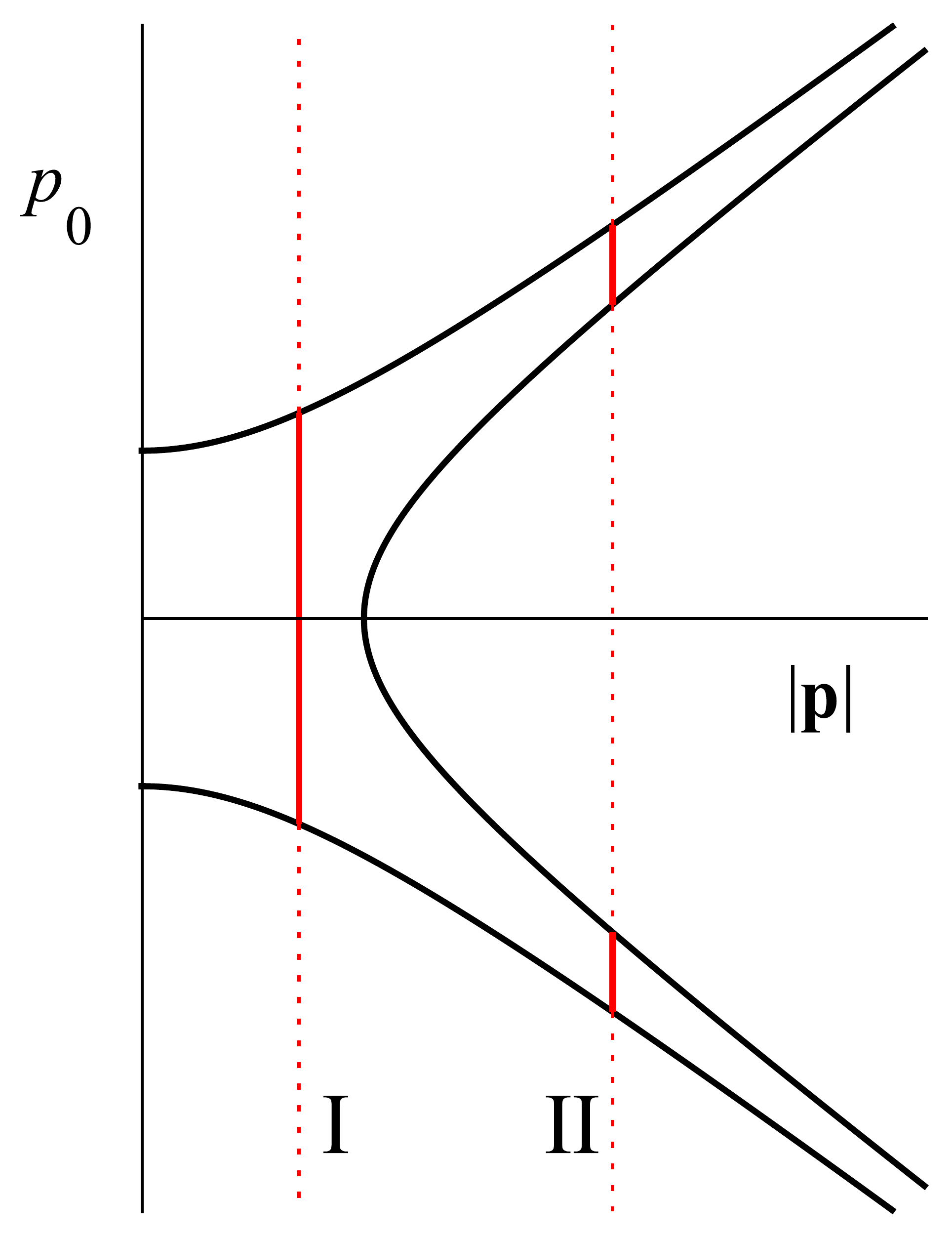}
\label{figure:bandwidth}
\end{figure}

    The set $S$ of all $p_0$ and $\bt{p}$ satisfying this inequality is
the closed interior of a region bounded by hyperboloids. This
is an unbounded region, so there exist values of $p_0$ and
\textbf{p} inside the region which are individually arbitrarily
large in magnitude.  Unlike in the case of bandlimited
functions in flat Euclidean space, there exists no discrete set of
points $\La := \{ x _n \} _{n \in \bm{Z}} \subset M$ of finite
density which is a set of sampling for $B(M, \Om)$. This follows
from the fact that $B(M,\Om)$ is the image of $L^2 (S)$ under the
Fourier transform, where $S$ is the region in $\bm{R} ^4$ which
obeys the inequality (\ref{eq:bandlimit}). It is straightforward
to check that for any choice of $1\leq d \leq 3$ the volume of
$S$ is infinite, so that Landau's theorem implies that the minimum
density a set of points needs to have in order to be a set of
sampling for $B(M, \Om )$ is infinite.

    Nevertheless, elements of $B(M, \Om)$ still have special reconstruction properties. Consider an arbitrary
 spatial mode, \emph{i.e.}, consider a field $\phi(t,\mbf{p})$ of fixed spatial momentum $\mbf{p}$. Then the temporal frequencies $p_0$
of that spatial mode $\phi(t,\mbf{p})$ are confined to the compact set
$S_{\mbf{p}}$ described by the inequality \be r_1 := \Re\left\{\sqrt{
|\mbf{p}| ^2 - \Om ^2}\right\} \leq | p_0 | \leq \sqrt{ |\mbf{p}| ^2 + \Om ^2}
=: r_2 . \label{eq:radii} \ee $S_{\mbf{p}}$ is a single interval
$[-r_2, r_2]$ for $ |\mbf{p}| \leq \Om$ and the union of two intervals
$[-r_2 , -r_1 ] \cup [r_1, r_2]$ for $|\mbf{p}|
> \Om $ (see Figure \ref{figure:bandwidth}). For the set $\La = \{t_n\}$
to be a set of sampling for a fixed spatial mode $\mbf{p}$, by Landau's theorem,
$\La$ must have a density at least as great as a minimum density which
is proportional to the length of $S_\mbf{p}$. It is not hard to check that the length of $S_\mbf{p}$ vanishes
in the limit as $|\mbf{p}| \rightarrow \infty$. Therefore,
each fixed spatial mode is temporally bandlimited and as a
function of time belongs to $B (\mathbb{R},S_{\mbf{p}})$. In this simple case
where $S_{\mbf{p}}$ is composed of at most two intervals, it is
known how to explicitly construct sets of sampling which achieve
the minimum density required by Landau's theorem \cite{Linden}.
The fact that the volume of $S_{\mbf{p}}$ vanishes in the limit as
$|\mbf{p}| \rightarrow \infty$ then means that the density of
temporal degrees of freedom of a fixed spatial mode $\mbf{p}$
vanishes in this limit.

    This is the manifestation of the covariant ultra-violet cutoff in
flat spacetime. Increasingly large spatial modes still exist, but
their density of temporal degrees of freedom decreases to zero; they are
``frozen out." Further notice that this cutoff respects Lorentz
symmetry, as it must since it is fully covariant. Suppose
that one performs a Lorentz transformation to new time and space
co-ordinates $(t', \mbf{x} ' )$. If one had a temporal lattice of
sample points $\La := \{ t_n \}$ in the old co-ordinates which was
dense enough to reconstruct a fixed spatial mode $\mbf{p}$ in the
old co-ordinates, then in the new co-ordinates the time
co-ordinate is dilated, so that the image of the set of points
$\La '= \{t_n' \}$ is less dense. However, under this co-ordinate
change, the fixed spatial mode $\mbf{p}$ becomes a larger fixed
spatial mode $\mbf{p} '$ due to length contraction.
Consequently, the density a set of points in the time co-ordinate $t'$
needs to have in order to be a set of sampling for the fixed
spatial mode $\mbf{p} '$ is lower. In other words, the rate at which one needs
to measure the values of a fixed spatial mode in the new
co-ordinates in order to reconstruct it for all time is slower
than the rate required in the old co-ordinates. This allows the image set $\La '$ of $\La$
under a Lorentz transformation to be both less dense and a set of sampling for the
higher spatial frequency mode $\mbf{p} '$.
A further interesting observation is that since there is an
upper bound for the temporal bandwidth volume of any fixed spatial
mode, it is possible that there could be a set of points in time
that is dense enough to be a set of sampling for \emph{every} fixed
spatial mode. Such sets of points $\La$ do indeed exist and are
not difficult to construct. This means that if one knows the
values that a covariantly bandlimited field takes on each spatial
hypersurface corresponding to the fixed values of time belonging
to $\La$, then the field can be reconstructed everywhere for all space
and time. Such a reconstruction formula can be written down
explicitly. Namely, let $\La := \{ x_n \} \cup \{ y_n \}$, where
$x_n := \frac{n\pi}{\sqrt{2} \Om } $ and $y_n := \frac{(n +
\alpha)\pi}{\sqrt{2} \Om }$. Here, $\alpha \in (0,1)$ is arbitrary.
The volume, or length of the set $S _{\mbf{p}}$ is always less
than or equal to $2\sqrt{2} \Om$ for any fixed $ |\mbf{p} |>0$ so
that by Landau's theorem (Theorem \ref{thm:Landau}), this set is
dense enough to be a set of sampling for every fixed spatial mode
$\mbf{p}$. In fact, $\La$ is a set of sampling for each fixed
spatial mode, and if $\phi _{\mbf{p}} $ is a fixed spatial mode,
the reconstruction formula is given by \be \phi _{\mbf{p}} (t) =
\sum _{n \in \bm{Z}} \phi _\mbf{p} ( x_n ) K (t - x_n ) + \phi _{\mbf{p}} (y_n)
K (y_n -t ), \label{eq:recon} \ee where for $|\mbf{p} |< \Om$, \be
K(t) = \frac{\cos (\sqrt{2} \Om t - \alpha \sqrt{2}\Om /2 ) - \cos
(\alpha \sqrt{2} \Om /2 )} {\Om t \sin (\alpha \sqrt{2} \Om /2) },
\ee while for $|\mbf{p} | > \Om$, \small \ba K(t) & = & \frac{\cos
\left( m_{\mbf{p}}  \alpha \sqrt{2} \Om - (\sqrt{2} \Om +
\sqrt{\mbf{p} ^2 - \Om ^2})t \right)} { \sqrt{2} \Om t \sin \left(
m_\mbf{p}  \alpha \sqrt{2}\Om   \right)} \nonumber \\ & -&  \frac{
\cos \left( m_\mbf{p} \alpha \sqrt{2}{\Om} - ( (2m_\mbf{p} -1)
\sqrt{2} \Om  - \sqrt{\mbf{p} ^2 - \Om ^2})t \right) } { \sqrt{2}
\Om t \sin \left( m_\mbf{p}  \alpha \sqrt{2}\Om   \right)}
\nonumber \\ &+ & \frac{\cos \left( (2 m_{\mbf{p}} -1) \alpha
\sqrt{2} \Om /2 - ((2 m_\mbf{p} -1 ) \sqrt{2} \Om - \sqrt{\mbf{p}
^2 - \Om ^2})t \right)} { \sqrt{2} \Om t \sin \left( (2 m_\mbf{p}
-1 ) \alpha \sqrt{2}\Om /2  \right)}\nonumber \\ & -&   \frac{
\cos \left( (2 m_\mbf{p} -1 ) \alpha \sqrt{2}{\Om} /2 -
\sqrt{\mbf{p} ^2 - \Om ^2}t \right) }{ \sqrt{2} \Om t \sin \left(
(2 m_\mbf{p}  -1 ) \alpha \sqrt{2}\Om /2  \right)} \ea \normalsize
\cite{Linden}. In the above, $m_\mbf{p}$ is the largest integer
for which $(m_\mbf{p} -1) \sqrt{2} \Om < \sqrt{\mbf{p} ^2 - \Om
^2}$. It follows that if one is given a covariantly bandlimited
field $\varphi (t , \mbf{x} )$ and if one knows the values $\{
\varphi ( t_n , \mbf{x} ) \} _{t_n \in \La ; \mbf{x} \in \bm{R} ^
3 } $, then this field can be reconstructed perfectly everywhere,
for all space and time. This is because the knowledge of the values $\{ \varphi ( t_n ,
\mbf{x} ) \} _{t_n \in \La ; \mbf{x} \in \bm{R} ^ 3 } $ determines the values
$ \{ \phi _\mbf{p} (t _n ) \} _{t_n \in \La ; \mbf{p}
\in \bm{R} ^3} $ through the spatial Fourier transform. Here $\phi _\mbf{p} (t) := \phi (t, \mbf{p})$ is
the spatial Fourier transform of $\varphi (t, \mbf{x})$. These values together with the reconstruction formula
(\ref{eq:recon}) can be used to calculate $\phi _\mbf{p} (t)$
for all time $t$ and for all fixed spatial modes $\mbf{p}$. Taking
the inverse Fourier transform then yields the original bandlimited
field $\varphi (t, \mbf{x} )$ for all values of space and time.

\subsection{Aside: Fixed temporal modes of bandlimited fields on flat spacetime}

    The conclusions in this section followed from the assumption
that a spatial mode $\mbf{p}$ is fixed. One could instead choose to
fix a temporal mode, or any combination of the temporal mode and
spatial co-ordinate modes, and perform a similar analysis.

Note that while the situation is symmetric in $1+1$ dimensional spacetime, \emph{i.e.},
whether one fixes a temporal or a spatial mode one obtains identical results, the situation
is slightly different in $1+k$ dimensional spacetime if $k\geq 2$.

For example, in $1+2$ dimensional spacetime, the spatial bandwidth volume of
a fixed temporal mode $p_0$ is the area between two disks of radii $r_1$ and $r_2$ (see equation (\ref{eq:radii})). Moreover, this area is constant, independent of $p_0$.
If $k =3$ the spatial bandwidth volume is the volume between two spheres of radii $r_i$, and this volume
diverges as $p_0 \rightarrow \infty$. While each fixed temporal mode has a finite spatial bandwidth volume,
there is no uniform upper bound for this volume in $1+3$ dimensions.  Landau's theorem then implies that there is no set of spatial points $\La$ which
is a set of sampling for every fixed temporal mode $p_0$ in $1+3$ dimensional flat spacetime, and hence there is no
reconstruction formula of the type (\ref{eq:recon}) for a fixed temporal mode $\phi _{p_0} (\mbf{x} )$ of a bandlimited
function on this spacetime. (See \cite[Chapter 6]{Robthesis} for details.)

\section{Expanding FRW spacetimes}

We are now prepared to begin our investigation of the covariant UV cutoff in expanding spacetimes.  The line element for FRW spacetime can be expressed as
$ ds^2 = -dt^2 + a^2(t) d\mbf{x} ^2 $ where $t$ is proper time.
The d'Alembertian in $1+3$ dimensions is then a second order
differential linear operator given by \be \wt{\Box} :=
-a^{-3} (t) \left( \frac{d}{dt} a^3 (t) \frac{d}{dt} - a(t) \Delta
\right) , \ee where $\Delta$ is the spatial Laplacian.  This expression
is defined to act on a suitable dense domain in $L^2 (\bm{R} ^4 , a ^3 (t) \, dt \, d\mbf{x})$.

Here it is assumed that $a(t)>0$ is a positive
function. Under Fourier transform in the spatial variables we obtain the
differential operator
\be \wt{\Box} _k := -a^{-3} (t) \left( \frac{d}{dt} a^3 (t) \frac{d}{dt} + a(t) k^2
\right).  \label{eq:dalbert} \ee  Here $k^2 := | \mbf{k}
| ^2 \geq 0$ is the magnitude of the comoving spatial frequency vector.
In order to construct a well-defined linear operator using this expression,
we need to specify its domain.

    For example, consider de Sitter spacetime, for which $a(t)
=e^{Ht}$, where $H$ is the Hubble constant. In this case one can
show that there is no unique self-adjoint operator associated with
the expression (\ref{eq:dalbert}). Instead one can use this
expression to define a symmetric operator with equal and infinite
deficiency indices on a dense domain in the Hilbert space.  Once one has
this symmetric operator one can construct a family of self-adjoint extensions (self-adjoint
operators which extend the original symmetric operator). Each
self-adjoint extension is a possible choice of self-adjoint d'Alembertian for this
spacetime.  Physical input is necessary to determine which choice is correct.
Indeed, the covariant
ultra-violet cutoff $\Om$ is imposed by projecting onto the
subspace spanned by eigenfunctions to the d'Alembertian whose
eigenvalues are less than or equal to $\Om ^2 $ in magnitude. This means that, in principle, one must implement the correct
 self-adjoint extension of the symmetric d'Alembertian operator before cutting off
its spectrum. Fortunately, in some cases it will be sufficient to show that the choice of self-adjoint extension does not matter to answer the question at hand.

    For the remainder of this paper, the time co-ordinate will be
restricted to an interval $[t_i , t_f]$, where $-\infty \leq t_i < t_f < \infty$.
For power law spacetime, $a(t) = (Ht) ^k$, it is sensible to choose $t_i =0$ (indeed if $k$ is not an integer,
this scale factor is nonsensical for negative $t$),  but
for spacetimes such as de Sitter, where $a(t) = e^{Ht}$, or something more general such as
$a(t) = \sin ^2 (t) e^{t}$, we are free to choose $t_i = - \infty$.
The restriction to times $t_f$ that are finite is not necessary but it will be useful to determine a mode's total number of degrees up to some finite time.

Let us  now consider a comoving spatial mode $\phi(t,\mbf{k})$ and investigate, as in the
case of flat spacetime, the implications of the cutoff of the spectrum of the d'Alembertian for that mode.
In other words, we consider operators defined using the expressions given by equation (\ref{eq:dalbert}) acting
on the Hilbert space $\mc{H} := L^2 ([t_i , t_f ] ; a^3 (t) \, dt )$
for fixed $k := | \mbf{k} |$. Cutting off the spectrum of the
d'Alembertian by $\Om ^2$ on the full spacetime then amounts to
cutting off the spectrum of each of the fixed-$k$ d'Alembertians
$\dal _k$ by $\Om ^2$. As discussed above, in order to make this operation
precise, we must use the expression (\ref{eq:dalbert}) to
define a particular self-adjoint operator for each comoving spatial momentum magnitude $k$. Since (\ref{eq:dalbert})
is a second order Sturm-Liouville differential expression, we can do this by using the theory
of ordinary differential operators \cite{Naimark}:

Define the linear manifold \be
\dom{\dal _k ^* } := \{ \phi \in \mc{H} \  | \ \phi , a^3(t) \phi
' \in AC [t_i, t_f] ; \dal _k \phi \in \mc{H} \}, \ee and define $
\dom{\hat{\dal} _k  } $ as the set of all $\phi \in \dom{\dal _k ^*}$
which have support contained in a compact subinterval of $(t_i ,
t_f )$. Here, $AC [t_i, t_f]$ denotes the set of all absolutely
continuous functions on $[t_i , t_f]$.
The linear manifold $\dom{\dal _k ^*}$ is the largest
linear manifold in $\mc{H}$ on which the formal expression
(\ref{eq:dalbert}) can be defined for fixed $k$. It is not
difficult to verify that the operator $\hat{\dal} _k$ defined by
$\hat{\dal} _k \phi = \wt{\dal}  _k \phi $ for all
$\phi \in \dom{\hat{\dal} _k}$ is a symmetric operator. Now let $\Box _k$ again denote the closure of $\hat{\Box} _k$.
It follows from \cite[Section 17]{Naimark} that the domain of $\Box _k$ is given by:
\be \dom{\Box _k} =  \{ \phi \in \dom{\Box _k ^*} \ | a^3 (t_i ) \phi ' (t_i) = 0 = a^3 (t_f) \phi ' (t_f ) ; \ \phi (t_i ) = 0 =\phi (t_f) \}. \label{eq:domaink}\ee
It can be shown that the operator $\dal _k ^*$, defined by $\dal _k ^* \phi = \wt{\dal} _k
 \phi $ for all $\phi \in \dom{\dal _k ^*}$ is the adjoint to
$\dal _k$ (\cite{Naimark}, Section 17).  It also follows from \cite[Section 17]{Naimark} that $\Box _k$ has deficiency indices $(n,n)$ where
$0 \leq n \leq 2$. Recall that the deficiency indices $(n_+ , n_-)$ of a symmetric operator $A$ are defined to be $n_\pm := \dim \left(\ker{ A^* \mp i } \right)$,
that $A$ has a $\mc{U} (n)$-parameter family of self-adjoint extensions if and only if $n_+ = n = n_-$, and that if $n_+ = 0 = n_-$, then $A$
is essentially self-adjoint. Here $\mc{U} (n)$ denotes the unitary group of $n\times n$ unitary matrices. For more details of the theory of symmetric operators and their self-adjoint extensions, we refer the reader
to \cite{Glazman,Reed2}, and to \cite{Naimark} for an introduction to symmetric differential operators.

\subsection{Deficiency indices and self-adjoint extensions}

    To investigate the effects of the covariant ultraviolet
cutoff on the spacetimes discussed above, one must
cut off the spectrum of self-adjoint extensions of the symmetric
d'Alembertians $\Box _k$. The deficiency indices and therefore the set of possible self-adjoint extensions of the d'Alembertian depend on the choice of scale factor function $a(t)$ (although a few conclusions can even be drawn before making a specific choice of $a(t)$).

As we will see in the next section, the case where $\Box _k$ has deficiency indices $(2,2)$ will be generic and therefore of particular interest. If $\Box _k$ has deficiency indices $(2,2)$,
then by a theorem of Krein, the spectrum of any self-adjoint extension $\dal
_k '$ is bounded below \cite[pg. 93]{Naimark}. Furthermore, the
resolvent $(\dal _k ' - \la ) ^{-1} $, with $\la \in \mathbb{C} \setminus
\mathbb{R}$, will be a compact Hilbert-Schmidt
operator \cite[Section 19]{Naimark}. Hence, the spectrum of such
a resolvent consists of the closure of the set of eigenvalues
whose only possible accumulation point is $0$. It follows that the
spectrum of any self-adjoint extension $\dal _k ' $ of $\dal _k $
is purely discrete and has no finite accumulation point. Furthermore,
one can show that each of the symmetric operators $\Box _k $ is simple,
\emph{i.e.}, each has no eigenvalues. To see this,  first recall the formula (\ref{eq:domaink})
for the domain of $\Box _k$. Then by the existence-uniqueness theorem for ordinary differential equations, \cite[Section 16]{Naimark}, it follows that if $\phi \in \dom{\Box _k}$ is a solution to $\Box _k \phi =
\la \phi $, then $\phi \equiv 0$ since $a^3 (t_i ) \phi ' (t_i ) = 0 = \phi (t_i)$.

If $\Box _k$ is a simple symmetric operator with deficiency indices
$(2,2)$, it follows that the spectrum of any self-adjoint extension $\Box _k '$
of $\Box _k$ consists of eigenvalues of multiplicity at most $2$, and that
given any $\la \in \bm{R}$ there is a self-adjoint extension $\Box _k '$ of
$\Box _k $ for which $\la $ is an eigenvalue of multiplicity $2$ \cite[Section 83]{Glazman}.

  Imposing the covariant ultra-violet cutoff on the kind of spacetime discussed above means
that one chooses a particular self-adjoint extension $\dal _k ' $
and then considers the subspace, $B_k (\Om )$, spanned by eigenfunctions of this
d'Alembertian whose eigenvalues are less than or equal to $\Om ^2
$ in magnitude. The nature of the spectra of the $\dal
_k '$ in this kind of spacetime implies that each of the subspaces $B_k (\Om)$ will be of
finite dimension $N_k < \infty$. Let $\{ f _n \} _{n=1} ^N$ be a basis for such a
subspace. Since the $f _n$ are linearly independent functions of
time, almost all sets of  $N$ time points $\{ t_m \} _{m =1} ^N$
in $[t_i ,t_f]$ are such that the matrix $A$ with entries $A_{nm}
:= f _n (t_m ) $ is invertible. Letting $\phi _\mbf{k} = \sum _{n=1} ^N
c_n f _n$ denote the comoving spatial mode $\mbf{k}$ of a covariantly bandlimited
field in this spacetime, it follows that $ \phi _\mbf{k} (t_m) = \sum
_{n=1} ^N c_n f _n (t_m) $ so that \be \phi _\mbf{k} (t) = \sum _{n=1}
^N \sum _{j=1} ^N A ^{-1} _{nm} \phi _k (t_m)  f_n (t) .\ee Hence,
each spatial mode of a covariantly bandlimited field in any FRW
spacetime for which $\Box _k$ has deficiency indices $(2,2)$ is
completely determined by the values it takes on a
finite number of points in time, and has only a finite number of
temporal degrees of freedom. Of course this simple linear algebra argument applies to any
finite dimensional subspace of functions.

\subsubsection{The deficiency indices are generically $(2,2)$}

As we will now show, certain generically satisfied conditions are sufficient to guarantee that the deficiency indices of $\Box _k$ are $(2,2)$.

If $t_i$ is finite, then $\Box _k$ will have deficiency indices $(2,2)$  provided the function $a (t)$ is sufficiently well-behaved on the compact interval $[t_i , t_f]$ \cite[Section 17]{Naimark}.  In particular, this is the case if $a(t)$ and $1 / a(t)$ are
finite, positive and differentiable on $[t_i , t_f]$. If $t_i = -\infty$, then applying \cite[Corollary 8]{Everitt-limitcircle} shows that a sufficient condition
for $\Box _k$  to have deficiency indices $(2,2)$ is that \be \frac{1}{k} \int _{-\infty} ^{t_f} a(t) \, dt < \infty . \ee Note that we need to assume that $k \neq 0$, although this is not a significant restriction. Indeed, for fields with a finite positive mass, $m>0$, the term
$k^2$ becomes $k^2 +m^2$, eliminating the zero mode as a special case. If each $\dal _k$ has equal deficiency indices $(2,2)$, then each $\dal _k$
has a $\mc{U}(2)$-parameter family of self-adjoint extensions \cite{Glazman}. The
extensions can be constructed by extending the domain of $\dal _k$
by choosing appropriate boundary conditions at the end points of
the interval $[t_i , t_f]$ \cite[Section 18]{Naimark}.

\it Fixed assumptions about the cosmic expansion. \rm  
From now on we will always assume that $a(t)$ is positive, differentiable and finite. Also, unless stated otherwise, we will assume for the remainder of the paper that either:
\bn
    \item $t_i > -\infty $ so that $[t_i ,t_f]$ is a compact interval and $\Box _k$ has deficiency indices $(2,2)$ for any $k\geq 0$, or

    \item $\int _{-\infty} ^{t_f} a(t) dt < \infty$ so that $\Box _k$ has deficiency indices $(2,2)$ for any $k > 0$.

\en

\subsection{ Example : de Sitter spacetime}

    In de Sitter spacetime, $a(t) = e^{Ht}$, one can verify that in the case where $t_i = - \infty$, the operator
$\dal _k$, for $k>0$, has deficiency indices $(2,2)$:

Switching to conformal time co-ordinates, let $\eta (t) = \frac{1}{H}e^{-Ht}$, \emph{i.e.}, $\eta ' (t) = -1/a(t)$ and $\eta \in [\eta _f , \infty )$.
It follows that $a (\eta ) = \frac{1}{H \eta}$ and that the operator $\Box _k$ takes the form:

\be \Box _k = - a^{-4} (\eta) \left( \pat _\eta a^2 (\eta ) \pat _\eta + a^2 (\eta ) k^2 \right) ,\ee

\noindent acting on a suitable dense domain in $L^2 \left( [\eta _f , \infty) , a^4 (\eta ) d\eta \right)$. If they exist, any eigenfunctions of the adjoint operator $\dal_k^*$ must satisfy the following differential equation:

\be \label{eq:besselDE} \phi '' (\eta ) -\frac{2}{\eta} \phi ' (\eta ) + \left( k^2 + \frac{\la}{H^2 \eta ^2} \right) \phi (\eta ) = 0 .\ee
Two linearly independent solutions are given by the Bessel functions
\be f_\la (\eta ) = \eta ^{3/2} J _{\beta(\la)} (k \eta) , \ee and
\be g_\la (\eta ) = \eta ^{3/2} Y _{\beta(\la)} (k \eta ), \ee where $\beta(\la) := \sqrt{\frac{9}{4} - \frac{\la}{H^2}}$.
If $\la$ is such that $\beta(\la) \neq \bm{Z}$, then the second linearly independent solution can be chosen to be
$h _\la (\eta) = \eta ^{3/2} J _{-\beta (\la)} (k \eta )$ instead.

Since $\Box _k$ commutes with complex conjugation, it must have equal deficiency indices \cite[Theorem X.3]{Reed2}, so we just need to check that if $\la := \left( \frac{9}{4} -i \right) H^2 $, $\beta(\la) = e^{i\frac{\pi}{4}} = \frac{1+i}{\sqrt{2}}$, then both solutions $f_\la$ and $h_\la $ are normalizable.  This will show that $2 = n_+ = n_-$. Now for large $\eta$,
\be J_\beta (k \eta) \sim  \sqrt{\frac{ 2}{\pi k \eta }} \cos \left( k\eta - \frac{\beta\pi}{2} - \frac{\pi}{4} \right) + \mc{O} \left( \frac{1}{k\eta} \right) \ee so that
$f_\la (\eta)$ is asymptotically proportional to $ \eta \cos \left( k \eta -  c \right) $  for large $\eta$ \cite[Chapter 11.6]{Arfken}.  It follows that $f_\la$ and similarly $g_\la$ are square integrable with respect to the measure $a^4 (\eta ) d\eta = \frac{1}{H^4 \eta ^4} d\eta$  so that $\Box _k$ has deficiency indices $(2,2)$ for $k >0$, as expected.

As discussed above, it follows that the spaces $B_k (\Om )$ of fixed co-moving spatial momentum modes of bandlimited fields have finite dimension $N_k$ for any $k >0$.
It is remarkable that, therefore, any non-zero fixed co-moving spatial mode in de
Sitter spacetime with a finite end-time has only a finite number
of degrees of freedom in time $t \in (-\infty , 0 ]$. Intuitively, this is plausible because, since the spacetime is expanding
at an exponential rate, any
fixed co-moving spatial mode with magnitude $k$ corresponds to
extremely small proper wavelengths or high proper spatial
frequencies  for most of the time in $t \in (-\infty , 0]$. As was shown for the case of flat spacetime, the
larger the proper spatial frequency mode, the smaller is its
density of temporal degrees of freedom. Hence, since any fixed
co-moving spatial mode in de Sitter spacetime with a finite end-time
corresponds to exponentially large proper spatial frequencies for
most of $t \in (-\infty , 0]$, it is to be expected that such a
comoving mode could have merely a finite number of temporal
degrees of freedom.

\subsubsection{The zero mode}

As remarked previously, the zero mode is unphysical and can be safely ignored for any massive scalar field.  Nevertheless we
include some facts here about the zero mode for the sake of completeness.
In the case where $k=0$, the ODE becomes:
\be \phi '' (\eta ) -\frac{2}{\eta} \phi ' (\eta ) +  \frac{\la}{H^2 \eta ^2}  \phi (\eta ) = 0 ,\ee which has
two linearly independent solutions $f_\la (\eta ) = \eta ^{q_1 (\la)} $ and $g_\la (\eta ) = \eta ^{q_2 (\la)}$ where
$q_1 (\la) = \frac{3}{2} - \sqrt{\frac{9}{4} - \frac{\la}{H^2}}$ and $q_2 (\la) = \frac{3}{2} + \sqrt{\frac{9}{4} - \frac{\la}{H^2}}$.
Again, choosing $\la = \left( \frac{9}{4} -i \right) H^2 $ we obtain that
\be f_\la (\eta ) = \eta ^{\frac{3 +\sqrt{2}}{2} +\frac{i}{\sqrt{2}}},\ee
and \be g_\la (\eta ) = \eta ^{\frac{3 -\sqrt{2}}{2} +\frac{i}{\sqrt{2}}} .\ee
It is now clear that $f_\la $ is square integrable while $g_\la $ is not (with respect to the measure).  This shows that for de Sitter
spacetime with $t_i = -\infty$, $\Box _0$ has deficiency indices $(1,1)$. Also note that if $\la \geq \frac{9H^2}{4}$, then
the real part of both $q_i (\la)$ is equal to $3$, which shows that both solutions are non-normalizable. It then follows from \cite[Theorem 3, pg 92]{Naimark},
that $[ \frac{9H^2}{4} , \infty ) $ belongs to the continuous spectrum of $\Box _0$. Moreover the fact that $[\frac{9}{4} , \infty )$ is in the continuous spectrum
of the zero mode can be used to show that if $\Omega^2 > \frac{9}{4}$ that $N_k \rightarrow \infty$ as $k \rightarrow 0$. See for example \cite[Section 8.4.1]{Robthesis}.

Even though $B_0 (\Om )$ has an infinite number of temporal degrees of freedom, the zero mode still has a finite temporal density of degrees of freedom.
For simplicity consider $1+1$ dimensional de Sitter spacetime. Then, for an appropriate choice of self-adjoint extension of $\Box _0$, it can be shown that if $\phi \in B_0 (\Om )$ where $\Omega^2 = B^2 + \frac{1}{4}$ then
\be \phi (\eta ) = \sum _{n =0} ^\infty \phi (\eta _n ) K (\eta _n , \eta ) ,\ee where $\eta _n = e^{\frac{n}{B}}$ and
\be K(\eta _n , \eta ) = \sqrt{\frac{\eta}{\eta _n}} \left( \frac{ \sin \left( B \pi \left( \ln (\eta _n \eta ) \right) \right) }{B \pi \ln (\eta _n \eta ) }
+ \frac{\sin \left( B \pi \ln (\eta / \eta _n ) \right) }{B \pi \ln (\eta / \eta _n ) } \right)\ee \cite[Section 8.4.4, pg 99]{Robthesis}.  This shows that the zero spatial
mode of a covariantly bandlimited field is stably reconstructible from the values it takes on the set $\La := \{ \eta _n \} _{n=0} ^\infty$.

\section{The number of temporal degrees of freedom of a fixed comoving spatial mode}

We saw that under the assumptions above, $\Box _k$ has
deficiency indices $(2,2)$ for any $k>0$.  Recall that as a consequence, no matter which self-adjoint extension $\Box _k ' $ of $\Box _k$  is used to
define $B_k (\Om )$, $N_k := \dim \left( B_k (\Om ) \right) $ will be finite.

In flat spacetime we observed that large proper momentum spatial modes have less temporal bandwidth than smaller proper momentum modes.  This suggests
that also in expanding FRW spacetimes, large proper momentum spatial modes  will have fewer temporal
degrees of freedom than smaller proper momentum spatial modes. Studying this problem amounts to studying the spectrum of the second order differential operators $\dal _k$.

\subsection{The freezing of comoving modes at early times}

At early enough times, the proper wavelength of any comoving mode is arbitrarily small. From the study of the UV cutoff on Minkowski space, we therefore intuitively expect that the density of temporal degrees of freedom of a comoving mode drops for earlier and earlier times, until it reaches zero or a very small number and the mode therefore freezes.

Equivalently to considering earlier and earlier times, we may of course also consider larger and larger $k$. In this subsection, we will therefore study the behaviour of $N_k$ for large k. We will show that (under our fixed assumptions about the cosmic expansion $a(t)$) there indeed exists a $K >0$ such that $N_k = c  \leq 2 $ for all
$k \geq K$. Here $K$ is independent of the choice $\Box _k ' $ of self-adjoint extension used to define $B_k (\Om )$, while the fixed constant $c$ depends on the choice of self-adjoint extension. Again, the choice of self-adjoint extension is to be determined by physics.

Choose $\la '  \in \bm{R}$ such that $|\la ' | \leq \Omega^2$. By \cite[Section 83]{Glazman}, there is a unique self-adjoint extension $\Box _k ' $ of $\Box _k $ which
has $\la '$ as an eigenvalue of multiplicity $2$. Let $f_{\la '} $ and $g_{\la '} $ denote two linearly independent solutions to the ordinary
differential equation $\wt {\Box } _k   \phi = \la ' \phi$. Here the tilde over the $\Box _k$ is used to denote that we are not considering $\wt {\Box } _k$ as a differential
operator acting on any fixed domain, but as a differential expression acting on any functions for which this expression is defined.

\begin{lemming}
If a real number $\la \neq \la ' $ is an eigenvalue of $\Box _k ' $ then
$0 = \Delta ( \la ; \la ' , k ) := \ip{f_\la}{f_{\la '} } \ip{g_\la}{g_{\la '}} - \ip{ f_\la }{g_{\la ' }} \ip{g_\la }{f_{\la '}}$.
\end{lemming}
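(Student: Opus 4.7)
The plan is to use the elementary fact that eigenvectors of a self-adjoint operator corresponding to distinct eigenvalues are orthogonal, combined with the limit-circle structure forced by the $(2,2)$ deficiency indices. First I would record that since $\Box _k$ has deficiency indices $(2,2)$, every solution to the ODE $\wt{\Box} _k \phi = \mu \phi$ lies in $\mc{H}$ for every $\mu \in \bm{C}$ (this is the classical limit-circle-at-both-endpoints case, see Naimark Section 17). In particular, all four functions $f_\la, g_\la, f_{\la '}, g_{\la '}$ belong to $\mc{H}$, so the four inner products in $\Delta (\la ;\la ',k)$ are well-defined. Moreover, since by construction $\la '$ is an eigenvalue of $\Box _k '$ of multiplicity $2$, both $f_{\la '}$ and $g_{\la '}$ are genuine eigenfunctions lying in $\dom{\Box _k '}$.

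Next, suppose $\la \neq \la '$ is another eigenvalue of $\Box _k '$ and let $\psi \in \dom{\Box _k '}$ be a corresponding eigenfunction. Since $\psi$ satisfies $\wt{\Box} _k \psi = \la \psi$ pointwise, and the solution space of this second-order ODE is two-dimensional and spanned by $f_\la, g_\la$, we may write $\psi = \alpha f_\la + \beta g_\la$ with $(\alpha , \beta ) \neq (0,0)$. By self-adjointness of $\Box _k '$, the eigenvector $\psi$ must be orthogonal to both $f_{\la '}$ and $g_{\la '}$, which yields the homogeneous linear system
\be
\alpha \ip{f_\la}{f_{\la '}} + \beta \ip{g_\la}{f_{\la '}} = 0, \qquad \alpha \ip{f_\la}{g_{\la '}} + \beta \ip{g_\la}{g_{\la '}} = 0.
\ee
The existence of a nontrivial solution $(\alpha , \beta )$ forces the determinant of the coefficient matrix to vanish, and this determinant is precisely $\Delta (\la ;\la ',k)$, giving the claim.

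I do not anticipate any significant obstacle beyond making sure that the inner products actually converge, i.e., that $f_\la , g_\la \in \mc{H}$. This is where the $(2,2)$ deficiency-index assumption does the real work, via the fixed hypotheses on $a(t)$ from the previous subsection: they put the problem in the limit-circle case at both endpoints, so \emph{every} solution of the eigenvalue ODE is square-integrable. Once this is in hand the lemma is just linear algebra plus the orthogonality of eigenvectors of $\Box _k '$ belonging to distinct eigenvalues.
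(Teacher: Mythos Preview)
Your proposal is correct and follows essentially the same route as the paper: write an eigenfunction at $\la$ as a linear combination of the two fundamental solutions $f_\la, g_\la$, use orthogonality to the two eigenfunctions $f_{\la'}, g_{\la'}$ at the distinct eigenvalue $\la'$, and conclude that the resulting $2\times 2$ coefficient matrix is singular. Your added remarks on why the inner products converge (limit-circle at both endpoints from the $(2,2)$ indices) are a welcome clarification that the paper leaves implicit.
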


\begin{proof}
    If $\la $ is another eigenvalue of $\Box _k '$ then there are $c_1 , c_2 \in \bm{C}$ such that $ c_1 f_\la + c_2 g_\la $ is an eigenvector of $\Box _k '$, and hence it must be orthogonal
to both $f_{\la '}$ and $g_{\la '}$. Hence
\be \left( \begin{array}{c}  0 \\ 0 \end{array} \right) = \left( \begin{array}{c}  \ip{c_1 f_\la + c_2 g_\la}{f_{\la '}}  \\ \ip{c_1 f_\la + c_2 g_\la}{f_{\la '}} \end{array} \right)
= \left( c_1, c_2 \right) \left( \begin{array}{cc}  \ip{f_\la}{f_{\la '}} & \ip{f_\la}{g_{\la '}} \\  \ip{g_\la}{f_{\la '}} & \ip{g_\la}{g_{\la '}} \end{array} \right). \ee
It follows that the determinant of the above matrix, which is $\Delta (\la ; \la ' , k)$, vanishes.
\end{proof}

Using the method of Picard iterates that is often employed to prove the existence-uniqueness theorem for ordinary differential equations, it is not difficult to show that if
the solutions $f_\la$ and $g_\la$ are chosen by imposing fixed initial conditions at some regular point in $[t_i , t_f]$, then $f_\la (\eta)$ and $g_\la (\eta)$ are entire as functions of $\la $ for fixed $k$ and $\eta$
\cite[pages 51-56]{Naimark} and \cite[Section 2.3]{Hille}. Using this fact and Morera's theorem, it is straightforward to check that if $f_\la$ and $g_\la$ are chosen in this way,
that $\Delta (\la ) := \Delta (\la ; \la ' , k )$ for fixed $\la ' $ and $k$ is an entire function of $\la$.

Since we assume $t_f$ is finite, we can define a conformal time variable \be \eta :=  \int _t ^{t_f}  a^{-1} (t^\prime) \, dt^\prime. \ee It follows that
$\eta \in [0 , \eta _i ]$ where $\eta _i := \eta (t_i)$.  For example, in de Sitter space with $t_i = - \infty$ we would get that $\eta _i = + \infty$.
The line element in these co-ordinates is $ds^2 = a^2 (\eta ) \left( -d\eta ^2 + d\mbf{x} ^2 \right)$.  Re-calculating the symmetric d'Alembertian for a fixed co-moving spatial
mode of magnitude $k$ in the $\eta$ coordinates yields

\be \Box _k  = - a^{-4} (\eta) \left( \pat _\eta a^2 \pat _\eta + a^2 k^2 \right) ,\ee acting on a suitable dense domain in the Hilbert space
$L^2 \left( [0 , \eta _i ] ; a^4 (\eta ) \, d\eta \right)$. 
Then, $\wt{\Box} _k \phi (\eta ) = \la \phi (\eta )$ takes the form

\be (p \phi ' ) ' + \left( r k^2 -q \right) \phi =0 \label{eq:ode},\ee with $p(\eta) := a^2 (\eta)$, $r (\eta ) = a^2 (\eta )$, and $q(\eta ) = - a^4 (\eta ) \la$. Here we have chosen
$q(\eta )$ so as to treat $k^2$ as an `eigenvalue'. This ODE is already almost in Liouville normal form \cite[Section 2]{Fulton}.  To convert it to Liouville normal form we let $U(\eta ) := \phi (\eta )
\left( r(\eta ) p(\eta ) \right) ^{1/4} = a (\eta ) \phi (\eta )$. This yields the new ODE
\be -U'' + Q U = k^2 U \label{eq:lnf},\ee with the new potential \be Q(\eta ) = \frac{a''(\eta )}{a(\eta)} - \la a^2 (\eta ). \label{eq:potent} \ee
For large $k$, the potential $Q$ becomes negligible, and it is intuitively clear that solutions $U$ will behave asymptotically like sine or cosine. Therefore, the solutions $\phi$ to the
original ODE behave asymptotically like $a^{-1} (\eta )$ multiplied with some linear combination of sine and cosine functions.

More precisely, by \cite[Section 3]{Fulton}, let $U_\la $ and $V_\la$ be the solutions to the ODE (\ref{eq:lnf}) which obey the boundary conditions
\be \left( \begin{array}{c} U_\la (0) \\ U _\la ' (0) \end{array} \right) = \left( \begin{array}{c} 1 \\ 0 \end{array} \right) \ee and
\be \left( \begin{array}{c} V_\la (0) \\ V _\la ' (0) \end{array} \right) = \left( \begin{array}{c} 0 \\ k \end{array} \right) . \ee

The following asymptotic formulae are valid for large $k$:
\be U_\la (\eta) \sim \cos (k \eta) + \mc{O} \left( \frac{1}{k} \right) \ \ \mr{and} \ \ V_\la (\eta) \sim \sin(k \eta ) + \mc{O} \left(\frac{1}{k} \right) .\ee
The exact form of the asymptotic series for $U_\la$ and $V_\la$ depends on the potential $Q$ (equation (\ref{eq:potent})), which is determined by the scale factor
$a(\eta )$ and the eigenvalue $\la$ \cite[Section 3]{Fulton}.

Now let $f_\la = a^{-1} U _\la $ and $g_\la = a^{-1} V_\la$. Then $f_\la $ and $g_\la$ are two linearly independent solutions to the ODE (\ref{eq:ode})
which obey the boundary conditions:
\be \left( \begin{array}{c} f_\la (0) \\ f _\la ' (0) \end{array} \right) = \left( \begin{array}{c} 1 / a(0) \\ 0 \end{array} \right) \ee and
\be \left( \begin{array}{c} g_\la (0) \\ g _\la ' (0) \end{array} \right) = \left( \begin{array}{c} 0 \\  k / a(0) \end{array} \right) , \ee
and which have the asymptotic behaviour:
\be f_\la (\eta) \sim a^{-1} (\eta ) \cos (k \eta) + \mc{O} \left( \frac{1}{k} \right) \ \ \mr{and} \ \ g_\la (\eta) \sim a^{-1} (\eta ) \sin(k \eta ) + \mc{O} \left(\frac{1}{k} \right) .\ee

Now we define the function $\Delta (\la ) := \Delta (\la ; \la ' , k )$ as before using this choice of $f_\la$ and $g_\la$ for each $\la \in \bm{R}$. As discussed previously,
$\Delta (\la )$ will be an entire function of $\la$ for fixed $\la '$ and $k$. By the asymptotic formulae for $f_\la , g_\la$, it follows that as $k \rightarrow \infty$,
$\ip{f_\la}{f_{\la'}}$ approaches
\be \int _0 ^{\eta _i} \cos^2 (k \eta ) a^2 (\eta ) \, d\eta .\ee  Note that the fixed assumption that $\int _{t_i} ^{t_f} a(t) dt < \infty $ becomes
\be \int _0 ^{\eta _i} a^2 (\eta ) \, d\eta < \infty \ee upon transforming to conformal time. As $k \rightarrow \infty$ the frequency of oscillation of $\cos (k \eta )$ diverges,
and intuitively this integral will converge to the integral of $a^2 (\eta )$ multiplied by $\frac{1}{2}$, the average value of $\cos ^2 (\pi x)$ over one half period $0 \leq x \leq 1$:

\begin{lemming}
Suppose that $a^2 (\eta )$ is integrable on the interval $[0 , \eta _i]$. Then
\be \lim _{k \rightarrow \infty}  \int _0 ^{\eta _i } \cos ^2 (k \eta ) a^2 (\eta ) d\eta = \frac{1}{2} \int _0 ^{\eta _i} a^2 (\eta ) d\eta = 1/2 \| 1/a \| _\mc{H} ^2 .\ee
\label{lemming:average}
\end{lemming}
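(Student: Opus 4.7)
The plan is to reduce the claim to the Riemann--Lebesgue lemma by means of the double-angle identity. Concretely, I would first write
\[ \cos^2(k\eta) = \frac{1}{2} + \frac{1}{2}\cos(2k\eta), \]
so that
\[ \int_0^{\eta_i} \cos^2(k\eta)\, a^2(\eta)\, d\eta = \frac{1}{2}\int_0^{\eta_i} a^2(\eta)\, d\eta + \frac{1}{2}\int_0^{\eta_i} \cos(2k\eta)\, a^2(\eta)\, d\eta. \]
The first term is exactly the desired limit and does not depend on $k$, so the whole task is to show that the second term vanishes as $k \to \infty$.

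Next, since by hypothesis $a^2$ is integrable on $[0,\eta_i]$, the function $a^2 \cdot \chi_{[0,\eta_i]}$ belongs to $L^1(\bm{R})$. The Riemann--Lebesgue lemma then gives immediately that
\[ \int_0^{\eta_i} \cos(2k\eta)\, a^2(\eta)\, d\eta \;\longrightarrow\; 0 \qquad \text{as } k \to \infty, \]
which concludes the proof of the limit. The only care needed is when $\eta_i = +\infty$ (e.g.\ in de Sitter with $t_i = -\infty$): in that case the integrability of $a^2$ on the half line, which is precisely the fixed assumption $\int_{t_i}^{t_f} a(t)\, dt < \infty$ re-expressed in conformal time, still places $a^2$ in $L^1$ and the Riemann--Lebesgue conclusion applies verbatim.

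Finally, to justify the stated equality with $\tfrac{1}{2}\|1/a\|_\mathcal{H}^2$, I would just unpack the norm on $\mc{H} = L^2([0,\eta_i]; a^4(\eta)\, d\eta)$:
\[ \|1/a\|_\mc{H}^2 = \int_0^{\eta_i} \left| \tfrac{1}{a(\eta)} \right|^2 a^4(\eta)\, d\eta = \int_0^{\eta_i} a^2(\eta)\, d\eta, \]
so the two forms of the right-hand side in the statement coincide.

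No step looks like a genuine obstacle: the main content is the Riemann--Lebesgue lemma, and the only subtlety worth flagging is that when $\eta_i = \infty$ one must invoke the fixed assumption on $a$ to put $a^2$ in $L^1$ before applying the lemma; everything else is algebra.
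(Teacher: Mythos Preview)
Your proposal is correct and follows essentially the same approach as the paper: the paper also uses the identity $2\cos^2 x = 1 + \cos(2x)$, notes that the hypothesis (together with the fixed assumptions on $a$) places $a^2 \chi_{[0,\eta_i]}$ in $L^1(\bm{R})$, and then invokes the Riemann--Lebesgue lemma to kill the oscillatory term. Your additional remarks on the case $\eta_i = +\infty$ and the verification that $\int_0^{\eta_i} a^2(\eta)\,d\eta = \|1/a\|_\mc{H}^2$ are helpful clarifications but do not change the argument.
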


\noindent In the statement of the above lemma, $\mc{H} := L^2 \left( [0 , \eta _i ] ; a^{4} (\eta ) \, d\eta \right) $, and $\| \cdot \| _\mc{H}$ denotes the norm in this Hilbert space.

\begin{proof}
    This is a straightforward consequence of the Riemann-Lebesgue lemma. Indeed, as remarked before, our fixed assumptions about the cosmic expansion $a(t)$  imply that $1/a \in \mc{H}$,
so that $a^2 (\eta ) \in L^1 [0, \eta _i ]$, and in particular $a^2 (\eta ) \chi _{[0 , \eta _i ]} (\eta) \in L^1 (\bm{R} )$.  Using the trigonometric identity $2 \cos ^2 (x) = 1 + \cos (2x) $ yields
\be \lim _{k \rightarrow \infty}  \int _0 ^{\eta _i } \cos ^2 (k \eta ) a^2 (\eta ) d\eta  =  \frac{1}{2} \int _0 ^{\eta _i} a^2 (\eta ) d\eta + \frac{1}{4} \lim _{k \rightarrow \infty}
\intfty \left( e^{i2k \eta} + e^{-i2k\eta } \right) a^2 (\eta ) \chi _{[0, \eta _i ]} (\eta ) d\eta. \ee By the Riemann-Lebesgue lemma, the second limit vanishes.
\end{proof}

Similar arguments show that $\ip{g_\la}{g_\la '}$ converges to $ \frac{1}{2} \| 1/a \| ^2 $ and that $\ip{f_\la}{g_{\la '}}$ converges to zero. It follows that
for fixed $\la$ and $\la '$, $\Delta (\la ; \la ' , k )$ converges to $\frac{1}{4} \| 1/ a \| ^4$ as $k \rightarrow \infty$. This fact can be used to prove the main result of this section:

\begin{thm}
    Given any choice of self-adjoint extension $\Box _k '$ of $\Box _k$, let $B_k (\Om ) := \ran {\chi _{[-\Omega^2 ,\Omega^2 ] } (\Box _k ' ) }$, which has finite dimension $N_k$.
    There is a $K > 0$ such that $ N_k = c \leq 2$ for all $k \geq K$ and for any choice of self-adjoint extension $\Box _k '$ used to define $B_k (\Om)$.
\end{thm}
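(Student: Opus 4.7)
The strategy is to upgrade Lemma 2 to a uniform limit on the compact parameter window $[-\Omega^2,\Omega^2]^2$ and then combine this with Lemma 1 and a boundary-condition comparison between self-adjoint extensions.

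\textbf{Step 1 (uniform convergence).} I would first show that $\Delta(\lambda;\lambda',k) \to \tfrac{1}{4}\|1/a\|_{\mc{H}}^4$ uniformly in $(\lambda,\lambda') \in [-\Omega^2,\Omega^2]^2$ as $k\to\infty$. Two ingredients are needed. First, the $\mc{O}(1/k)$ remainders in the asymptotics $f_\lambda(\eta) \sim a^{-1}(\eta)\cos(k\eta) + \mc{O}(1/k)$ and $g_\lambda(\eta) \sim a^{-1}(\eta)\sin(k\eta) + \mc{O}(1/k)$ must be uniform in $\lambda$ on compact sets; this is a standard consequence of Picard iteration applied to the integral form of the parameter-dependent ODE (\ref{eq:ode}). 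Second, the Riemann--Lebesgue step used in Lemma 2 is automatically uniform in $\lambda,\lambda'$, since the oscillating factor $\cos(2k\eta)$ does not involve $\lambda,\lambda'$ and $a^2\chi_{[0,\eta_i]} \in L^1(\bm{R})$ is fixed.

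\textbf{Step 2 (specific extensions).} Because the limit $\tfrac{1}{4}\|1/a\|_{\mc{H}}^4$ is a strictly positive constant, Step 1 furnishes $K>0$ such that $\Delta(\lambda;\lambda',k) > 0$ for all $k \geq K$ and all $\lambda,\lambda' \in [-\Omega^2,\Omega^2]$. For any $\lambda' \in [-\Omega^2,\Omega^2]$, let $\dal_k^{(\lambda')}$ denote the unique self-adjoint extension of $\dal_k$ having $\lambda'$ as a double eigenvalue. By Lemma 1, any further eigenvalue $\lambda \neq \lambda'$ of $\dal_k^{(\lambda')}$ lying in the cutoff window would force $\Delta(\lambda;\lambda',k)=0$, which is ruled out for $k \geq K$. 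Hence $\lambda'$ is the unique eigenvalue of $\dal_k^{(\lambda')}$ in $[-\Omega^2,\Omega^2]$, giving $N_k = 2$ for this family.

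\textbf{Step 3 (arbitrary extensions).} To extend this to an arbitrary self-adjoint extension $\dal_k'$, I would argue by contradiction. Suppose $\dal_k'$ had two distinct eigenvalues $\mu_1 \neq \mu_2$ in $[-\Omega^2,\Omega^2]$. Then any $\mu_2$-eigenvector $\phi_2 = c_1 f_{\mu_2} + c_2 g_{\mu_2}$ must be orthogonal to the entire $\mu_1$-eigenspace of $\dal_k'$. Combining this orthogonality with the pair of boundary conditions that select $\dom{\dal_k'}$ inside $\dom{\dal_k^*}$ yields the two independent orthogonality relations $\langle \phi_2, f_{\mu_1}\rangle = 0 = \langle \phi_2, g_{\mu_1}\rangle$, which after eliminating $(c_1,c_2)$ collapses to $\Delta(\mu_2;\mu_1,k)=0$, contradicting Step 2. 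Thus $N_k \leq 2$ for every extension, with the precise value $c \in \{0,1,2\}$ pinned by the chosen boundary data.

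\textbf{Main obstacle.} Step 3 is the delicate part, especially when $\mu_1$ is a simple rather than double eigenvalue of $\dal_k'$, because then a single orthogonality does not by itself generate the two conditions needed for $\Delta(\mu_2;\mu_1,k)=0$; the missing condition must be extracted from the description of $\dom{\dal_k'}$ as a two-dimensional Lagrangian subspace of $\dom{\dal_k^*}/\dom{\dal_k}$ with respect to the boundary symplectic form. A cleaner alternative is to parameterize all self-adjoint extensions directly by endpoint boundary data, write the eigenvalue equation as the vanishing of a $2\times 2$ determinant built from those data and the endpoint values of $f_\lambda,g_\lambda$, and verify by the same Riemann--Lebesgue asymptotics as in Step 1 that this determinant converges uniformly on $[-\Omega^2,\Omega^2]$ to a nonzero limit outside of at most two pinned points determined by the extension.
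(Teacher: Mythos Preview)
Your Steps 1 and 2 track the paper's argument closely (and in fact your Step 1 is more careful than the paper, which asserts uniform convergence from pointwise convergence to a constant without further justification). The real divergence is in Step 3.

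The paper does \emph{not} attempt to pass from the special extensions $\dal_k^{(\lambda')}$ to an arbitrary extension by a direct orthogonality/boundary-form argument. Instead it shifts the anchor point \emph{outside} the window: it fixes $\lambda' = 3\Omega^2 + \epsilon$, applies the same $\Delta$-argument on the larger interval $[-3\Omega^2,3\Omega^2]$ to produce, for $k \geq K$, one self-adjoint extension with \emph{no} eigenvalues at all in $[-3\Omega^2,3\Omega^2]$, and then invokes Theorems~2 and~3 of \cite{Martin-uncer}. Those results translate the existence of a single extension with a spectral gap of radius $2\Omega^2$ around every point of $[-\Omega^2,\Omega^2]$ into a lower bound on the minimum uncertainty $\Delta(\dal_k)_t$ of the \emph{symmetric} operator, and from there into the statement that \emph{every} self-adjoint extension has at most two eigenvalues in $[-\Omega^2,\Omega^2]$. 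This buys a uniform conclusion across all extensions without ever dissecting the boundary data of a general $\dal_k'$.

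Your Step 3, by contrast, has the gap you identified and did not close. When $\mu_1$ is a simple eigenvalue of $\dal_k'$, orthogonality of an eigenvector $\phi_2$ to the $\mu_1$-eigenspace gives only one linear relation on $(c_1,c_2)$, not the two needed to force $\Delta(\mu_2;\mu_1,k)=0$; the claim that the missing relation can be ``extracted from the description of $\dom{\dal_k'}$ as a two-dimensional Lagrangian subspace'' is not substantiated, and in general there is no reason the second boundary condition should be expressible as orthogonality to the other solution $f_{\mu_1}$ or $g_{\mu_1}$. Your alternative route via an endpoint-determinant description of the eigenvalue condition is a reasonable program, but it is a different calculation from anything you have set up, and its uniform nonvanishing would require controlling the large-$k$ behaviour of boundary values (and derivatives) of $f_\lambda,g_\lambda$ at \emph{both} endpoints, including possibly a singular endpoint $\eta_i = \infty$, which is not covered by the interior asymptotics you quoted. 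The paper's detour through \cite{Martin-uncer} is precisely what circumvents this.
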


\begin{proof}

Let $\Omega$ be the bandlimit, or ultra-violet cutoff. Then for $\la   \in [-\Omega^2 , \Omega^2 ]$ and fixed $\la ' \in [-\Omega^2 ,\Omega^2]$, it follows that $\ip{f_\la }{f_{\la '} }$ converges
uniformly to $1/2 \| 1/ a \| ^2$ for all $\la$ in this interval as $k \rightarrow \infty$.

By Lemma \ref{lemming:average}, $\Delta (\la ; \la ' , k )$ converges to the positive constant $\frac{1}{4} \| 1/ a \| ^4 $ as $k \rightarrow \infty$ for any fixed $\la , \la '$.
Now keeping $\la '$ fixed, consider $\Delta _k (\la) := \Delta (\la ; \la ' , k )$ as a net of continuous (in fact entire) functions of the variable $\la$. Given any compact interval
$I$, since $\Delta _k$ converges pointwise to a continuous (constant) function as $k \rightarrow \infty$, this convergence must be uniform on $I$.

Hence there is a $K >0 $ such that for all $k \geq K$ we have that $\Delta (\la ; \la ' , k) > 0$ for all $ | \la | \leq \Omega^2$. This shows that
 $N_k =2 $ for all $k \geq K$ (provided we define $B_k (\Om )$ using the self-adjoint extension $\Box _k ' $ which has $\la '$ as an eigenvalue of multiplicity $2$).

Now repeat the same argument again with $\la ' = 3 \Omega^2 + \eps $ for some $\eps > 0$. Choose $K> 0 $ so large that for $k \geq K$ we have that $\Box _k ' $ has no eigenvalues
in $[-3\Omega^2 , 3\Omega^2 ]$.  Then by \cite[Theorem 3]{Martin-uncer},  $\Delta \left( \Box _k \right) _t > 2 \Omega^2 $ for all $t \in [-\Omega^2 , \Omega^2 ]$. Here $\Delta \left( \Box _k \right) _t $ denotes
the minimum uncertainty or standard deviation of the symmetric operator $\Box _k$, taken over all states with expectation value $t$.  It then follows from
\cite[Theorem 2]{Martin-uncer} that any self-adjoint extension of $\Box _k$ has at most $2$ eigenvalues in the interval $[-\Omega^2 , \Omega^2 ]$.  This demonstrates that no matter which self-adjoint
extension is used to define $B_k (\Om )$, there is a $K > 0$ independent of the choice of self-adjoint extension such that $k > K$ implies that $N_k = c \in \{ 0 ,1 ,2 \}$.
\end{proof}
\noindent Finally, we note that there always exists a self-adjoint extension of $\Box _k$ so that $N_k$ becomes $0$ for sufficiently large $k$.

\subsection{Example: de Sitter spacetime}

In the case of de Sitter spacetime, we can obtain more information about the value $K$
for which any co-moving spatial mode with magnitude $k\geq K$ has at most $2$ degrees of freedom in time. Consider a de Sitter spacetime with scale factor $a(t) = e^{Ht}$ that began expanding infinitely long ago in the past. We will consider the evolution up to a finite proper time $t_f$. In this case it is convenient to modify the calculations in the previous section by defining conformal time in a different way. Here, define conformal time by $\eta (t) = \frac{1}{H}e^{-Ht}$ so that
$\eta ' (t) = - 1/ a(t)$ as before, but now $\eta \in [\eta _f , \infty)$, where $\eta _f > 0$ is finite.

With these definitions, we can choose as our two linearly independent solutions to $\Box _k ^* \phi = \la \phi$
\be f_\la (\eta ) = \sqrt{\frac{k\pi}{2}} \eta ^{\frac{3}{2}} J _{\beta(\la)} (k \eta )  \sim \ \ \eta  \cos \left( k \eta - \frac{\pi}{2} \beta(\la) - \frac{\pi}{4} \right) + \mc{O} \left( \frac{1}{k\eta} \right)
\label{eq:asympf}, \ee
and
\be g_\la (\eta ) = \sqrt{\frac{k\pi}{2}} \eta ^{\frac{3}{2}} Y _{\beta(\la)} (k \eta )  \sim \ \ \eta  \sin \left( k \eta - \frac{\pi}{2} \beta(\la) - \frac{\pi}{4} \right) + \mc{O} \left( \frac{1}{k\eta} \right),
\label{eq:asympg} \ee

\noindent where $\beta(\lambda) \defeq \sqrt{\frac{9}{4} - \frac{\lambda}{H^2}}$. Using properties of Bessel functions, one can show that $\Delta (\la) = \Delta (\la ; \la ' , k ) $ is still entire as a function of $\la$ for fixed $\la '$ and $k$, and that it
converges uniformly to $\frac{1}{4} \| 1/ a \| ^4$ as $k \rightarrow \infty$ for $\la$ in any fixed compact interval $I$, where
\be \| 1 /a \| ^2 = \frac{1}{H^2} \int_{\eta _f } ^\infty \frac{1}{\eta ^2 } d\eta = \frac{1}{H^2 \eta _f } = a (\eta _f ) / H. \ee

One can estimate the threshold $K$ beyond which modes are dynamically frozen from the following considerations. For simplicity, fix $\lambda^\prime = 0$, and consider the problem of finding the threshold for the self-adjoint extension $\Box_k^\prime$ for which $\lambda^\prime = 0$ is an eigenvalue of multiplicity 2. Recall from Lemma 1 that $\Box_k^\prime$ has no other eigenvalues in $[-\Om^2,\Om^2]$ provided that $\Delta(\lambda;\lambda^\prime=0,k)$ has no zeros in $[-\Om^2,\Om^2]$. 
Thus, we wish to determine the value $K$ such that whenever $k \geq K$, $\Delta(\lambda; 0, k) \neq 0$ for all $\lambda \in [-\Om^2, \Omega^2]$.

The inner products that define $\Delta(\lambda; 0, k)$ are given by
\begin{equation}
\begin{array}{ll}
\displaystyle \langle f_\lambda, f_0 \rangle = \frac{k}{H^4} \left( \frac{\sin \delta}{\lambda/H^2} + \frac{\pi}{2} F(J,J)  \right) & \displaystyle \langle f_\lambda, g_0 \rangle = \frac{k}{H^4} \left( - \frac{\cos \delta}{\lambda/H^2} + \frac{\pi}{2} F(J,Y)  \right) \medskip \\
\displaystyle \langle g_\lambda, f_0 \rangle = \frac{k}{H^4} \left( \frac{\cos \delta}{\lambda/H^2} + \frac{\pi}{2} F(Y,J)  \right) & \displaystyle \langle g_\lambda, g_0 \rangle = \frac{k}{H^4} \left( \frac{\sin \delta}{\lambda/H^2} + \frac{\pi}{2} F(Y,Y)  \right),
\end{array}
\end{equation}

\noindent where $J$ and $Y$ are placeholders for the Bessel $J$ and Bessel $Y$ functions respectively and where
\begin{equation}
F(A,B) \defeq k \eta_f \frac{A_{\beta(\lambda)-1}(k\eta_f) \, B_{3/2} (k\eta_f) - A_{\beta(\lambda)}(k\eta_f) \, B_{1/2}(k\eta_f)}{\lambda/H^2} + \frac{A_{\beta(\lambda)}(k\eta_f) \, B_{\beta(\lambda)}(k\eta_f)}{\beta(\lambda) + 3/2}
\end{equation}

\noindent for $A, B \in \{J,Y\}$. Notice that if we divide $\Delta(\lambda;0,k)$ by $k^2/H^8$ and define new variables $y \defeq k\eta_f$ and $\ell \defeq \lambda/H^2$, we have that
\begin{equation}
\frac{H^8}{k^2} \Delta(\lambda; 0, k) = G(y,\ell),
\end{equation}

\noindent a function of only two variables. Since $H \neq 0$, $\Delta(\lambda)$ and $\frac{H^8}{k^2}\Delta(\lambda)$ have the same zeros. So, equivalently to our initial problem, we may search for the threshold $Z$ such that $\frac{H^8}{k^2}\Delta(\lambda) = G(y,\ell)$ has no zeros on $\ell \in [-\frac{\Om^2}{H^2}, \frac{\Om^2}{H^2}]$ for all $y \geq Z$. Hence, let
\begin{equation}\label{eq:threshZ}
Z\left(\frac{\Omega}{H}\right) \defeq \min_{y>0} \left\{ y \; \left| \; G (y,\ell) \neq 0 \quad \forall \, \ell \in \left[-\frac{\Om^2}{H^2}, \frac{\Om^2}{H^2} \right]  \right. \right\}.
\end{equation}

\noindent Since $y = k\eta_f$, we therefore have that the threshold $K$ is given by
\begin{equation}\label{eq:threshK}
K = \frac{1}{\eta_f} Z\left( \frac{\Omega}{H} \right).
\end{equation}

Without knowing the functional form of $Z$, equation \eqref{eq:threshZ} indicates that $Z$ is a monotonically increasing function of $\Omega/H$. Since $\Delta$ approaches a constant function as $k \rightarrow \infty$ (and hence as $y \rightarrow \infty$), the $\ell$-zeros of $G(y,\ell)$ are pushed farther and farther away from the origin as $y$ increases. Increasing $\Omega/H$ widens the interval over which $\Delta$ must be nonzero, so larger values of $\Omega/H$ produce larger thresholds $Z(\Omega/H)$.

Numerical analysis corroborates the preceding conclusion; a plot of $Z(\Omega/H)$ with $\ell$ restricted to lie in $\ell \in [-\frac{\Om^2}{H^2}, \, \min\{\frac{9}{4}, \frac{\Om^2}{H^2}\}]$ is shown in Figure \ref{fig:Zplot}. When $\ell \geq \frac{9}{4}$, the orders of the Bessel functions in the definition of $Z$ become imaginary, which considerably complicates their numerical analysis. We have circumstantially observed, however, that any $\ell$-zeros of $G(y,\ell)$ for which $\ell > \frac{9}{4}$ are greater in magnitude than the first zero for which $\ell \leq \frac{9}{4}$. 
Although it would be desirable to verify the behaviour of $Z$ for larger values of $\Omega/H$, the computational task becomes highly nontrivial as $\Omega/H$ increases. The current numerical analysis suggests that $Z$ is proportional to $(\Omega/H)^2$.

\begin{figure}[h]
\begin{center}
\includegraphics[scale=0.62]{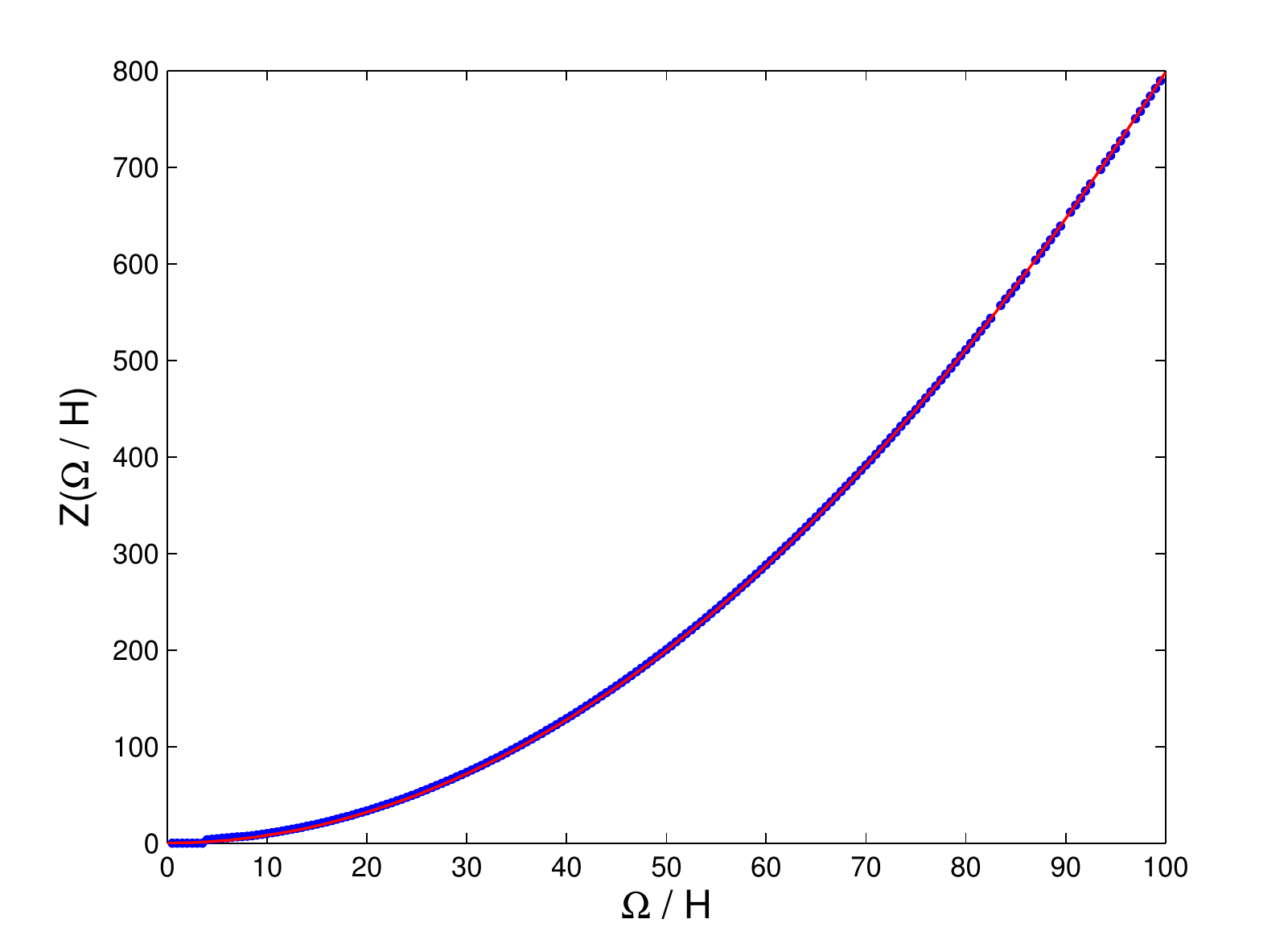}
\caption{Numerical simulation of $Z(\Omega/H)$. The red curve is a fit to the data by the curve $Z = 0.07988 (\Omega/H)^2$.}
\label{fig:Zplot}
\end{center}
\end{figure}

One would expect that the modes in de Sitter space that are dynamical are those modes that cross the Planck length before the end of inflation. Conversely, the dynamically frozen modes should be those whose comoving wavelengths are so small that they never cross the Planck length during the period of inflation. The dependence of the threshold for mode freezing $K$ on the parameters $\eta_f$, $\Omega$, and $H$ in equation \eqref{eq:threshK} is indeed consistent with this expectation.

As $\eta_f$ decreases toward 0, $K$ increases; this is natural, as a small conformal end time $\eta_f$ corresponds to a late proper end time $t_f$. The later the proper end time, the longer the period of inflation lasts, so the smaller a mode's comoving wavelength (and thus the larger $k$) must be if it is to never cross the Planck length and become dynamical.

The dependence of $K$ on the dimensionless ratio of the two length scales, $\Omega/H$, has an intuitive interpretation. Holding $H$ fixed, increasing $\Omega$ essentially corresponds to decreasing the value of the Planck length. This permits modes that were previously too small to cross the Planck length during inflation to make that crossing. Hence, the comoving wavelengths of modes that do not cross the Planck length decreases, or in other words, the threshold increases.


\section{The impact of the covariant ultraviolet cutoff on the spectrum of quantum field fluctuations}

So far, we discussed how the presence of the covariant ultra-violet cutoff in expanding FRW spacetimes would affect the kinematics of individual modes of scalar fields on the spacetime. Let us now begin to investigate how the covariant cutoff would affect the full dynamics of scalar quantum fields. In particular, the presence of this natural cutoff in nature could manifest itself with a potentially measurable effect in the cosmic microwave background (CMB). It is, therefore, of great interest to implement  the covariant ultraviolet cutoff in the standard model of cosmic inflation.

In order to determine how the covariant ultra-violet cutoff affects the dynamics of a scalar quantum field, it is convenient to use the path integral formalism since it is manifestly covariant. In the path integral picture, the assumption that there exists this covariant ultraviolet cutoff in nature means that the set of scalar fields that one integrates over  in the quantum field theoretic path integral is restricted to the space $B(M, \Omega)$ of covariantly bandlimited scalar fields on the spacetime $M$.

\subsection{The two-point function}
The dynamics of a free scalar field $\hat \phi$  are determined by the Feynman propagator, which is given in the path integral picture through
\begin{equation}
G_F(x,x^\prime) = \frac{\int \phi(x) \phi(x^\prime) e^{iS[\phi]} \, \mathcal{D}[\phi]}{\int e^{iS[\phi]} \, \mathcal{D}[\phi]},
\end{equation}
\noindent and in the interaction picture of the operator formalism through:
\begin{equation}
G_F(x,x^\prime) = \bra{0} T \hat \phi(x) \hat \phi(x^\prime) \ket{0}
\end{equation}
Here, $G_F(x,x^\prime)$ is ambiguous up to a choice of vacuum which must be made on the basis of physical input.

Denoting the coordinates $x = (t, \mathbf{x})$, the key object of interest for inflationary predictions for the CMB is $G_F(t=t^\prime,\mathbf{p})$, the equal-time spatial Fourier transform of the two-point function. This is because $G_F(t=t^\prime,\mathbf{p})$ yields the fluctuation spectrum of the modes of a quantum field $\hat \phi$, and it is this type of fluctuation spectrum that determines the spectrum of the scalar and tensor fluctuations that are imprinted in the temperature and polarization spectra of  the CMB \cite{Mukhanov}.
%
%
%
%
%
%
%
%
In the equation of motion for $G_F$,
\begin{equation}\label{eq:eom}
(\Box_x - m^2) G_F(x,x^\prime) = i \delta^4 (x-x^\prime),
\end{equation}
\noindent $\Box_x$ is the d'Alembertian with respect to unprimed co-ordinates.  The presence of homogeneous solutions requires a  choice of boundary condition on $G_F(x,x^\prime)$, which in turn corresponds to the choice of the vacuum.
Technically, the choice of boundary condition fixes a self-adjoint extension $\Box^\prime$ of $\Box$.
\subsection{The two-point function in Minkowski space}
We can here only begin our study of the impact of the covariant cutoff on the spectrum of quantum field fluctuations, namely by considering the simple case of $1+3$ dimensional Minkowski space, where the d'Alembertian reads $\Box_x = - \frac{\partial^2}{\partial t^2} + \frac{\partial^2}{\partial x^2} + \frac{\partial^2}{\partial y^2} + \frac{\partial^2}{\partial z^2}$. In this case, since $\Box_x = \Box_{x-x^\prime}$, and with equation \eqref{eq:eom}, we can treat $G_F$ as a function exclusively of the separation $x-x^\prime$. Fourier transforming equation \eqref{eq:eom} with respect to $x-x^\prime$, yields
\begin{equation}\label{eq:fourierG}
G_F(p) = \frac{i}{(2\pi)^2} \frac{1}{p_0^2  - |\mathbf{p}|^2 - m^2 + i\epsilon}.
\end{equation}
\noindent Through the Fourier transform, the choice of boundary condition becomes a choice of pole prescription. The introduction of  Feynman's $i \epsilon$ which implies the limit $\epsilon \rightarrow 0^+$ after integrations is the well-known choice of pole prescription that yields the Feynman propagator on Minkowski space. One recovers the equal time and momentum dependent two-point function of interest by performing the inverse Fourier transform with respect to $p_0$:
\begin{equation}\label{eq:transback}
G_F(t-t^\prime, \mathbf{p}) = \frac{i}{(2\pi)^{5/2}} \int_{-\infty}^{\infty} dp_0 \, \frac{e^{i p_0 (t-t^\prime)}}{p_0^2 - |\mathbf{p}|^2 - m^2  + i \epsilon}
\end{equation}
\noindent Setting $t = t^\prime$, using standard methods of contour integration and defining $\omega \defeq \sqrt{|\mathbf{p}|^2 + m^2}$, one has:
\begin{equation}\label{eq:reg2point}
G_F(t = t^\prime, \mathbf{p}) =  \frac{1}{(2\pi)^{3/2}} \frac{1}{2\omega},
\end{equation}
\noindent
How does the above calculation change if one assumes the covariant cutoff? Let $G_F^c(x,x^\prime)$ be the composition of $G_F(x,x^\prime)$ and the projector $P_{B(M,\Omega)}$ that projects onto $B(M,\Omega)$, the space of covariantly bandlimited functions, so that
\begin{equation}
G_F^c(x,x^\prime) \defeq P_{B(M,\Omega)} G_F(x,x^\prime) = \frac{\int_{B(M,\Omega)} \phi(x) \phi(x^\prime) e^{iS[\phi]} \, \mathcal{D}[\phi]}{\int_{B(M,\Omega)} e^{iS[\phi]} \, \mathcal{D}[\phi]}.
\end{equation}
\noindent If we perform a full inverse Fourier transform on equation \eqref{eq:fourierG}, we obtain the following integral representation of $G_F(x-x^\prime)$:
\begin{equation}\label{eq:fullTransback}
G_F(x-x^\prime) = \frac{i}{(2\pi)^4} \int dp_0 \, d^3\mathbf{p} \, \frac{e^{i ( p_0 (t-t^\prime) - \mathbf{p}\cdot(\mathbf{x}-\mathbf{x}^\prime) )}}{p_0^2 - |\mathbf{p}|^2 - m^2  + i \epsilon}
\end{equation}
\noindent Recall that the plane waves $e^{i ( p_0 (t-t^\prime) - \mathbf{p}\cdot(\mathbf{x}-\mathbf{x}^\prime) )}$ are the eigenfunctions of $\Box_{x-x^\prime}$ with corresponding eigenvalues $p_0^2 - |\mathbf{p}|^2$. Equation \eqref{eq:fullTransback} is therefore manifestly a linear combination of eigenfunctions of the d'Alembertian, and so the action of the projector $P_{B(M,\Omega)}$ gives
\begin{equation}
G^c_F(x-x^\prime) = \frac{i}{(2\pi)^4} \int_{|p_0^2 - |\mathbf{p}|^2| \leq \Omega^2} dp_0 \, d^3\mathbf{p} \, \frac{e^{i ( p_0 (t-t^\prime) - \mathbf{p}\cdot(\mathbf{x}-\mathbf{x}^\prime) )}}{p_0^2 - |\mathbf{p}|^2 - m^2  + i \epsilon}.
\end{equation}
\noindent Performing a spatial Fourier transform on the previous equation to obtain the two-point function of cosmological interest, we find that
\begin{equation}
G_F^c(t-t^\prime, \mathbf{p}^\prime) = \frac{i}{(2\pi)^{11/2}} \int_{\mathbb{R}^3} d^3(\mathbf{x} - \mathbf{x}^\prime) \, e^{i\mathbf{p}^\prime \cdot (\mathbf{x}-\mathbf{x}^\prime)} \int_{|p_0^2 - |\mathbf{p}|^2| \leq \Omega^2} dp_0 \, d^3\mathbf{p} \, \frac{e^{i ( p_0 (t-t^\prime) - \mathbf{p}\cdot(\mathbf{x}-\mathbf{x}^\prime) )}}{p_0^2 - |\mathbf{p}|^2 - m^2  + i \epsilon}
\end{equation}
\noindent whence
\begin{equation}\label{eq:cutoffG}
G_F^c(t=t^\prime, \mathbf{p}) = \frac{i}{(2\pi)^{5/2}} \int_{\mathcal{I}(\mathbf{p})} dp_0 \, \frac{1}{p_0^2 - |\mathbf{p}|^2 - m^2 + i \epsilon}.
\end{equation}
\noindent The interval $\mathcal{I}(\mathbf{p})$ has two qualitatively different forms depending on the value of $|\mathbf{p}|$. Referring to equation (3) and Figure \ref{figure:bandwidth}, we have that
\begin{enumerate}[I.]
\item if $|\mathbf{p}| \leq \Omega$, then $\mathcal{I}(\mathbf{p}) = \left[-\sqrt{|\mathbf{p}|^2 + \Omega^2}, \sqrt{|\mathbf{p}|^2 + \Omega^2}\right]$, or
\item if $|\mathbf{p}| > \Omega$, then $\mathcal{I}(\mathbf{p}) = \left[-\sqrt{|\mathbf{p}|^2 + \Omega^2}, -\sqrt{|\mathbf{p}|^2 - \Omega^2} \right] \cup \left[\sqrt{|\mathbf{p}|^2 - \Omega^2}, \sqrt{|\mathbf{p}|^2 + \Omega^2}\right]$.
\end{enumerate}
\noindent Evaluating the integral in equation \eqref{eq:cutoffG}, we ultimately find that
\begin{equation}\label{eq:GFc}
G_F^c(t=t^\prime,\mathbf{p}) = \left\{
\begin{array}{ll}
\displaystyle \frac{1}{(2\pi)^{3/2}} \frac{1}{2\omega} - \frac{i}{(2\pi)^{5/2}} \frac{1}{\omega} \ln \left| \frac{r_2+\omega}{r_2-\omega} \right| & |\mathbf{p} | \leq \Omega \medskip \\
\displaystyle \frac{1}{(2\pi)^{3/2}} \frac{1}{2\omega} - \frac{i}{(2\pi)^{5/2}} \frac{1}{\omega} \left( \ln \left| \frac{r_2+\omega}{r_2-\omega} \right| - \ln \left| \frac{\omega+r_1}{\omega-r_1} \right| \right) & |\mathbf{p} | > \Omega
\end{array} \right.
\end{equation}
\noindent where $r_2 \defeq \sqrt{|\mathbf{p}|^2 + \Omega^2}$ and $r_1 \defeq \sqrt{|\mathbf{p}|^2 - \Omega^2}$. Details of the aforementioned calculation may be found in the appendix. A plot of the absolute values of both $G_F(t=t^\prime,\mathbf{p})$ and $G_F^c(t=t^\prime,\mathbf{p})$ are shown in Figure \ref{fig:plotComparison}.
\begin{figure}[h]
\begin{center}
\includegraphics[scale=0.6]{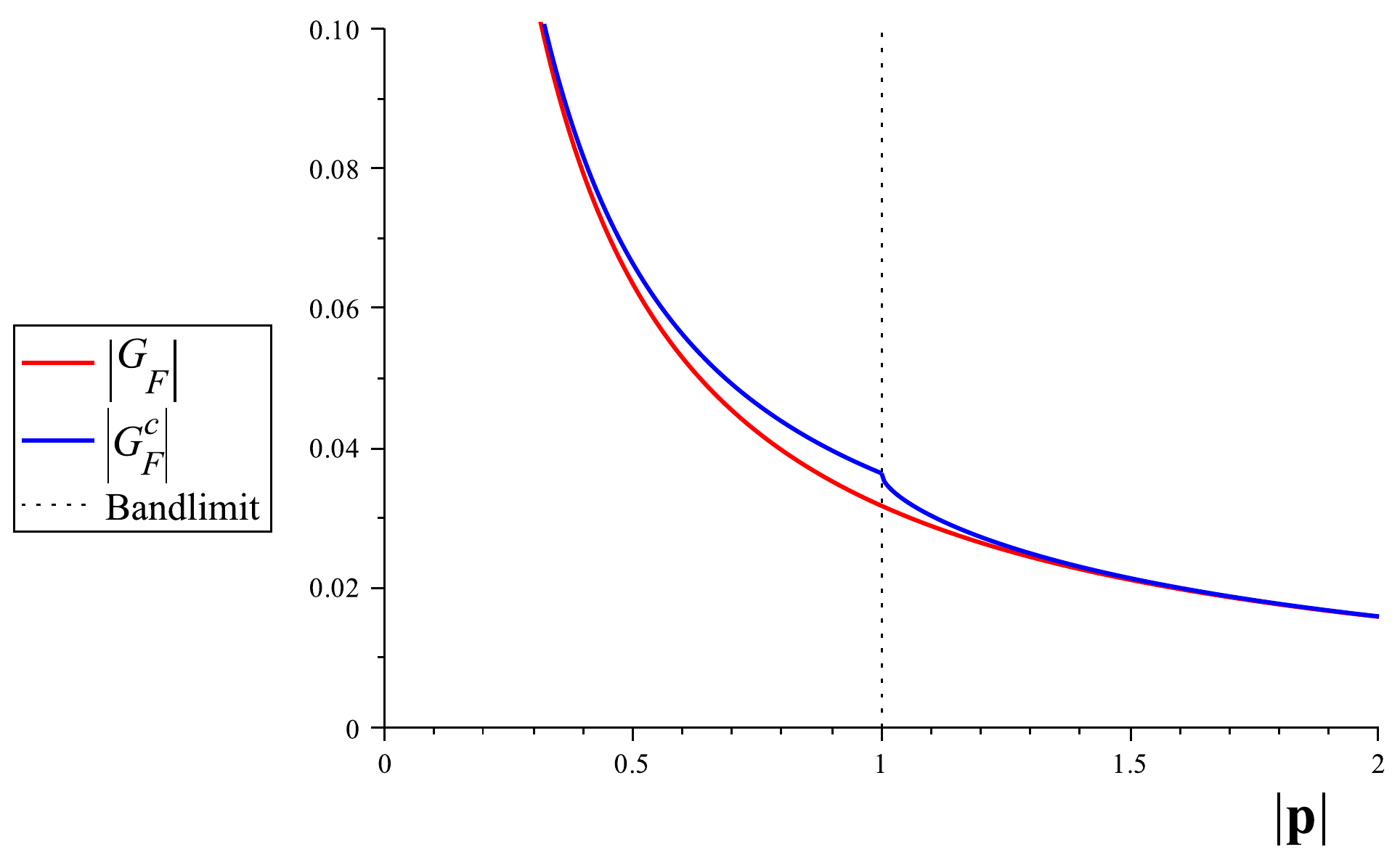}
\caption{Plots of the two-point function $G_F$ (red) and the covariantly bandlimited two-point function $G_F^c$ (blue). The plotting parameters are $\Omega = 1$ and $m = 0.01$.}
\label{fig:plotComparison}
\end{center}
\end{figure}
Notice that the integral in equation \eqref{eq:cutoffG} (which defines $G_F^c(t=t^\prime, \mathbf{p})$) is the same integral as in equation \eqref{eq:transback} (which defines $G_F(t=t^\prime,\mathbf{p})$) less an integration over $\mathbb{R} \setminus \mathcal{I}(\mathbf{p})$, \emph{i.e.},
\begin{equation}\label{eq:cutoffCorrect}
G_F^c(t=t^\prime, \mathbf{p}) = G_F(t=t^\prime, \mathbf{p}) - \frac{i}{(2\pi)^{5/2}} \int_{\mathbb{R} \setminus \mathcal{I}(\mathbf{p})} dp_0 \, \frac{1}{p_0^2 - \omega^2 + i \epsilon}.
\end{equation}
\noindent The poles of these integrands are at the points $p_0 = \pm \omega$, which always lie in the interval $\mathcal{I}(\mathbf{p})$, hence the integral over $\mathbb{R} \setminus \mathcal{I}(\mathbf{p})$ always yields a real number. Consequently, the difference between $G_F$ and $G_F^c$ is always a purely imaginary correction term, as is evident in equations \eqref{eq:GFc} and \eqref{eq:cutoffCorrect}. Since $G_F(t=t^\prime,\mathbf{p})$ is a real-valued function, the magnitude of $G_F^c(t=t^\prime,\mathbf{p})$ will always be larger for all $\mathbf{p}$, which is indeed observed in Figure \ref{fig:plotComparison}.
The physical consequence of this effect is that the covariantly bandlimited theory predicts larger quantum fluctuations of the field $\hat \phi$ than the standard theory. Intuitively, the discarding of high-frequency contributions to the two-point function eliminated destructive interference caused by these modes.

In fact, from Figure \ref{figure:bandwidth} it had to be expected that there will be a cusp in the graph of $G^c_F$ in Figure \ref{fig:plotComparison}. Referring to Figure \ref{figure:bandwidth}, the tangent line to the horizontally opening hyperbola that bounds the region $|p_0^2 - |\mathbf{p}|^2 | \leq \Omega^2$ is vertical at the cutoff value $|\mathbf{p}| = \Omega$. This point is the boundary across which the form of $\mathcal{I}(\mathbf{p})$ qualitatively changes and is the location of the cusp.
\section{Conclusions and Outlook}

We investigated the consequences of a possible covariant minimum wavelength in nature in the form of a cutoff on the spectrum of the d'Alembertian. In this scenario, wavelengths smaller than the Planck length do exist but the dynamics of such modes is in effect frozen  due to their exceedingy small temporal bandwidth. The information density in such modes is low in a literally information-theoretic sense.

In particular, we showed that comoving modes in expanding spacetimes unfreeze, i.e., that they grow temporal degrees of freedom and develop nontrivial dynamics, only after their proper wavelength starts exceeding the Planck length. Later, as in standard inflationary cosmology, once a comoving mode has outgrown the Hubble horizon it again loses its dynamics, namely because its two oscillatory solutions turn into a constant and a quickly decaying solution. (As in the usual inflationary scenario, the mode then re-acquires nontrivial dynamics after inflation ends and the mode re-enters the Hubble horizon.)

The fact that soon after their unfreezing the modes' dynamics  freezes up again, namely upon Hubble horizon crossing, could be crucial. In implies that, in principle, precision  measurements of the CMB and its polarization may provide evidence for or against the existence of a covariant bandlimit in nature. This in turn would imply the enticing prospect of experimental access to Planck scale physics. Indeed, according to the standard model of inflationary cosmology, the quantum fluctuations that seeded fluctuations in the CMB's temperature and polarization spectrum (and therefore cosmic structure formation) were frozen upon their Hubble horizon crossing during inflation. The Hubble horizon during inflation, however, is generally thought to have been only five or six orders of magnitude larger than the Planck length. This means that the physics of the Planck scale, such as potentially a covariant UV cutoff, could conceivable have an imprint in the CMB that is not all too much suppressed. This is because the magnitude of the effect can be expected to be proportional to some power, $\alpha$,  of the dimensionless ratio, $\sigma$,  of the two basic length scales, the Planck length and the Hubble length during inflation. Here, as we mentioned, $\sigma$ is known to be no larger than about $10^{-5}$ in most realistic models of inflation. In order to determine the experimental prospects, it will be important to determine $\alpha$.

\noindent To obtain a first indication of the size of the effects, let us consider Minkowski space. Here, the effect is very small, indicating a value of $\alpha$ close to 2. To see this, let us consider  Figure \ref{fig:effectStr} which shows the relative difference between $G_F(t=t^\prime,\mathbf{p})$ and $G_F^c(t=t^\prime,\mathbf{p})$. At $|\mathbf{p}| = 10^{-5} \Omega$, experiments would need to be able to measure the two-point function to the tremendous  precision of $2 \times 10^{-9} \, \%$ or better to be sensitive to the covariant cutoff.
\begin{figure}[h]
\begin{center}
\includegraphics[scale=0.5]{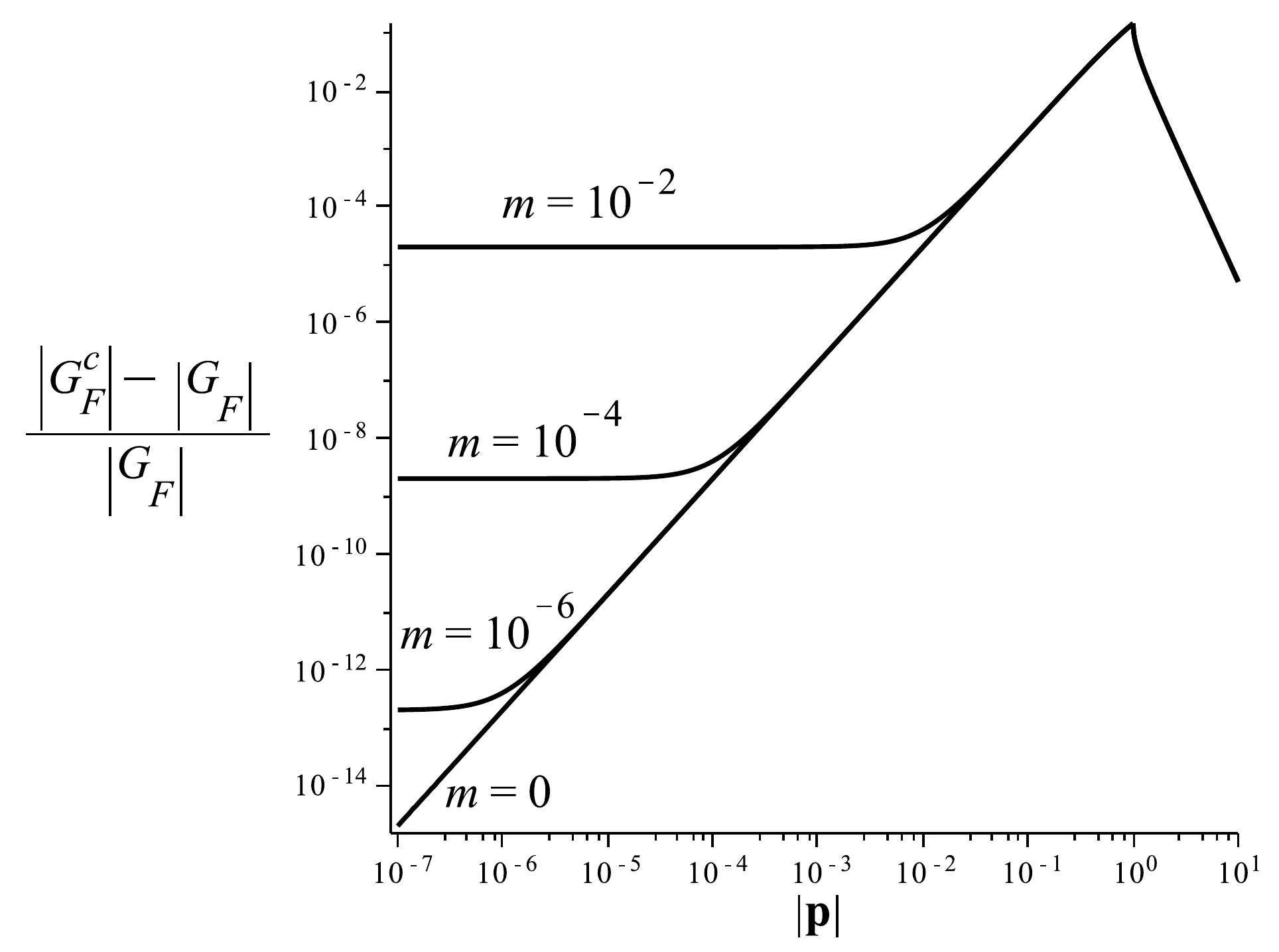}
\caption{Plot of the relative difference in magnitudes of $G_F(t=t^\prime, \mathbf{p})$ and $G_F^c(t=t^\prime, \mathbf{p})$ in Minkowski space for different field masses $m$. The cutoff is set to $\Omega = 1$.}
\label{fig:effectStr}
\end{center}
\end{figure}
For a typical inflationary spacetime such as de Sitter spacetime, or a more realistic cosmological model such as power law inflation, the effect of a covariant cutoff may well be much larger than in the case of Minkowski space, and $\alpha$ may be closer to $1$,  given that inflation tends to amplify quantum fluctuations.

It is difficult, therefore, to predict the value of $\alpha$ and therefore to determine how large an effect the covariant cutoff possesses in an expanding FRW spacetime at this stage. A detailed calculation of $\alpha$ for the covariant UV cutoff in the inflationary scenario is progress.

\bf Acknowledgment \rm

This work has been partially supported by the Discovery and Canada Research Chairs Programs of the Natural Sciences and Engineering Research Council (NSERC) of Canada and by the National Research Foundation of South Africa.

\appendix
\section{Appendix: Two-point function calculation details}

In this appendix, we discuss the details of how we calculate the covariantly bandlimited two-point function in Minkowski space given by equation \eqref{eq:cutoffG}. Recall that the region of integration $\mathcal{I}(\mathbf{p})$ assumes one of two qualitatively different forms depending on whether $|\mathbf{p}| \leq \Omega$ or $|\mathbf{p}| > \Omega$. We must consider each case separately.

Consider the first case. Here, $\mathcal{I}(\mathbf{p}) = [-\sqrt{|\mathbf{p}|^2 + \Omega^2}, \sqrt{|\mathbf{p}|^2 + \Omega^2}]$. Recall that the poles of the integrand in equation \eqref{eq:cutoffG} occur at $p_0 = \pm\,\omega = \pm \sqrt{|\mathbf{p}|^2 + m^2}$, so if $m < \Omega$, the poles lie within the bounds of integration, and if $m > \Omega$, the poles lie outside the bounds. Only the first case is physical, as we cannot have masses whose magnitudes lie beyond the Planck scale, to which $\Omega$ is set. Thus, further assume that $m < \Omega$. We can then evaluate the integral in equation \eqref{eq:cutoffG} as follows:
\begin{align}
G_F^c(t = t^\prime, \mathbf{p}) &= \frac{i}{(2\pi)^{5/2}} \int_{-r_2}^{r_2} dp_0 \, \frac{1}{p_0^2 - \omega^2} \\
&= \frac{i}{(2\pi)^{5/2}}\left[ \int_{-\infty}^\infty dp_0 \, \frac{1}{p_0^2 - \omega^2} - \int_{r_2}^\infty dp_0 \, \frac{1}{p_0^2 - \omega^2} - \int_{-\infty}^{-r_2} dp_0 \, \frac{1}{p_0^2 - \omega^2} \right] \\
&=  \frac{1}{(2\pi)^{3/2}} \frac{1}{2\omega} - \frac{2i}{(2\pi)^{5/2}} \lim_{R \rightarrow \infty} \int_{r_2}^R dp_0 \, \frac{1}{p_0^2 - \omega^2}
\end{align}

\noindent where $r_2 \defeq \sqrt{|\mathbf{p}|^2 + \Omega^2}$. We have that
\begin{align}
\lim_{R \rightarrow \infty} \int_{r_2}^R dp_0 \, \frac{1}{p_0^2 - \omega^2} &= \lim_{R \rightarrow \infty} \left. \frac{1}{2\omega} \left( \ln | p_0 - \omega | - \ln |p_0 + \omega | \right) \right|_{r_2}^R \\
&= \frac{1}{2\omega} \ln \left( \frac{r_2+\omega}{r_2-\omega} \right)
\end{align}

\noindent so for $|\mathbf{p} | \leq \Omega$ and $m < \Omega$,
\begin{equation}
G_F^c(t = t^\prime, \mathbf{p}) = \frac{1}{(2\pi)^{3/2}} \frac{1}{2\omega} - \frac{i}{(2\pi)^{5/2}} \frac{1}{\omega} \ln \left( \frac{r_2+\omega}{r_2-\omega} \right).
\end{equation}

For completeness, let us also calculate $G_F^c(t = t^\prime, \mathbf{p})$ for $m > \Omega$. In this case, the poles of the integrand lie outside the bounds of integration, so we can evaluate the integral using the fundamental theorem of calculus.
\begin{align}
G_F^c(t = t^\prime, \mathbf{p}) &= \frac{i}{(2\pi)^{5/2}} \int_{-r_2}^{r_2} dp_0 \, \frac{1}{p_0^2 - \omega^2} \\
&= \frac{i}{(2\pi)^{5/2}} \left. \frac{1}{2\omega} \left( \ln | p_0 - \omega | - \ln |p_0 + \omega | \right) \right|_{-r_2}^{r_2} \\
&= - \frac{i}{(2\pi)^{5/2}} \frac{1}{\omega} \ln \left( \frac{\omega+r_2}{\omega-r_2} \right)
\end{align}

Consider now the second case. Here, $\mathcal{I}(\mathbf{p}) = [-\sqrt{|\mathbf{p}|^2 + \Omega^2}, -\sqrt{|\mathbf{p}|^2 - \Omega^2} ] \cup [\sqrt{|\mathbf{p}|^2 - \Omega^2}, \sqrt{|\mathbf{p}|^2 + \Omega^2}]$. Again, the poles lie within the bounds of integration if $m < \Omega$ and outside the bounds if $m > \Omega$. Assuming $m < \Omega$ and proceeding similarly to before, we have
\begin{align}
G_F^c(t = t^\prime, \mathbf{p}) &= \frac{i}{(2\pi)^{5/2}} \left[ \int_{-r_2}^{-r_1} dp_0 \, \frac{1}{p_0^2 - \omega^2} + \int_{r_1}^{r_2} dp_0 \, \frac{1}{p_0^2 - \omega^2} \right] \\
&= \frac{i}{(2\pi)^{5/2}}\left[ \int_{-\infty}^\infty dp_0 \, \frac{1}{p_0^2 - \omega^2} \right. \\
\nonumber &\left. \qquad\qquad\qquad\qquad- \int_{-\infty}^{-r_2} dp_0 \, \frac{1}{p_0^2 - \omega^2} - \int_{-r_1}^{r_1} dp_0 \, \frac{1}{p_0^2 - \omega^2} - \int_{r_2}^{\infty} dp_0 \, \frac{1}{p_0^2 - \omega^2} \right] \\
&= \frac{1}{(2\pi)^{3/2}} \frac{1}{2\omega} - \frac{i}{(2\pi)^{5/2}} \frac{1}{\omega} \left[ \ln \left( \frac{r_2+\omega}{r_2-\omega} \right) - \ln \left( \frac{\omega+r_1}{\omega-r_1} \right) \right]
\end{align}

\noindent where $r_1 \defeq \sqrt{|\mathbf{p}|^2 - \Omega^2}$. In the case where $m > \Omega$,
\begin{align}
G_F^c(t = t^\prime, \mathbf{p}) &=  \frac{i}{(2\pi)^{5/2}} \left[ \int_{-r_2}^{-r_1} dp_0 \, \frac{1}{p_0^2 - \omega^2} + \int_{r_1}^{r_2} dp_0 \, \frac{1}{p_0^2 - \omega^2} \right] \\
&= \frac{2i}{(2\pi)^{5/2}} \int_{r_1}^{r_2} dp_0 \, \frac{1}{p_0^2 - \omega^2} \\
&= - \frac{i}{(2\pi)^{5/2}} \frac{1}{\omega} \left[ \ln \left( \frac{\omega+r_2}{\omega-r_2} \right) - \ln \left( \frac{\omega+r_1}{\omega-r_1} \right) \right]
\end{align}

\end{document}